\newcommand{\xMapsto}[2][]{\ext@arrow 0599{\Mapstofill@}{#1}{#2}}
\def\Mapstofill@{\arrowfill@{\Mapstochar\Relbar}\Relbar\Rightarrow}
\newtheorem{definition}{Definition}
\newtheorem{example}{Example}
\newtheorem{property}{Property}
\newtheorem{theorem}{Theorem}
\newtheorem{assumption}{Assumption}
\newcounter{parentalgorithm}
\newenvironment{subalgorithms}{%

  \refstepcounter{algorithm}%

  \protected@edef\theparentalgorithm{\thealgorithm}%
  \setcounter{parentalgorithm}{\value{algorithm}}%

  \setcounter{algorithm}{-1}%
  \def\thealgorithm{\theparentalgorithm\alph{algorithm}}%
  \ignorespaces
}{%
  \setcounter{algorithm}{\value{parentalgorithm}}%
  \ignorespacesafterend
}
\xpatchcmd{\algorithmic}{1.2em}{1.5em}{}{}
\newcommand{\updatelinenoprint}{%
  \setcounter{parentcounter}{\value{ALC@line}}
  \setcounter{ALC@line}{0}
  \renewcommand{\theALC@line}{\theparentcounter.\alph{ALC@line}}
}
\renewcommand{\theALC@line}{\arabic{ALC@line}}
\newcounter{parentcounter}
\newcommand*{\skipnumber}[2][1]{%
   {\renewcommand*{\theALC@line}[1]{}\STATE #2}%
   \addtocounter{ALC@line}{-#1}}
\begin{document}
%
\title{Multi-Source Spatial Entity Linkage}
%
%
%
%

\author{Suela Isaj,
        Torben Bach Pedersen,~\IEEEmembership{Senior Member IEEE,}
        and~Esteban Zim\'{a}nyi
\IEEEcompsocitemizethanks{\IEEEcompsocthanksitem Suela Isaj and Torben Bach Pedersen are with the Dep. of Computer Science, Aalborg University, Denmark. \protect 
E-mail: \{suela,tbp\}@cs.aau.dk
\IEEEcompsocthanksitem Esteban Zim\'{a}nyi is with the Dep. of Computer and Decision Engineering, Universit\'{e} libre de Bruxelles, Belgium. \protect
E-mail: ezimanyi@ulb.ac.be}

}

%
%

\markboth{IEEE TRANSACTIONS ON KNOWLEDGE AND DATA ENGINEERING, VOL. X, NO. Y, MONTH 2019}%
{Shell \MakeLowercase{\textit{et al.}}: Bare Demo of IEEEtran.cls for Computer Society Journals}
%



\IEEEtitleabstractindextext{%
\begin{abstract}
Besides the traditional cartographic data sources, spatial information can also be derived from location-based sources. However, even though different location-based sources refer to the same physical world, each one has only partial coverage of the spatial entities, describe them with different attributes, and sometimes provide contradicting information. Hence, we introduce the spatial entity linkage problem, which finds which pairs of spatial entities belong to the same physical spatial entity. Our proposed solution (\emph{QuadSky}) starts with a time-efficient spatial blocking technique (\emph{QuadFlex}), compares pairwise the spatial entities in the same block, ranks the pairs using Pareto optimality
with the \emph{SkyRank} algorithm, and finally, classifies the pairs with our novel \emph{SkyEx-*} family of algorithms that yield 0.85 \emph{precision} and 0.85 \emph{recall} for a manually labeled dataset of 1,500 pairs and 0.87  \emph{precision} and 0.6  \emph{recall} for a semi-manually labeled dataset of 777,452 pairs. Moreover, we provide a theoretical guarantee and formalize the \emph{SkyEx-FES} algorithm that explores only 27\% of the skylines without any loss in \emph{F-measure}. Furthermore, our fully unsupervised algorithm \emph{SkyEx-D} approximates the optimal result with an \emph{F-measure} loss of just 0.01. Finally, \emph{QuadSky} provides the best trade-off between \emph{precision} and \emph{recall}, and the best \emph{F-measure} compared to the existing baselines and clustering techniques, and approximates the results of supervised learning solutions.
\end{abstract}

\begin{IEEEkeywords}
spatial data, entity resolution, spatial blocking, skyline-based.
\end{IEEEkeywords}}

\maketitle

\IEEEdisplaynontitleabstractindextext

%
\IEEEpeerreviewmaketitle

\IEEEraisesectionheading{\section{Introduction}\label{sec:introduction}}

\IEEEPARstart{W}{eb} data and social networks are growing in terms of information volume and heterogeneity. Almost all online sources offer the possibility to introduce locations (geo-tagged entities accompanied by semantic details). A specific type of sources whose primary focus is locations is \emph{location-based sources}, such as Google Places, Yelp, Foursquare, etc. In contrast to cartographic data sources, locations in location-based sources have a hybrid form that stands between a \emph{spatial object} and an \emph{entity}. We refer to them as \emph{spatial entities} since they are spatially located but also identified by other attributes such as the name of the location, the address, keywords, etc. Spatial entities play a key role in several systems that rely on spatial information such as geo-recommender systems, selecting influential locations, search engines using geo-preferences, etc.

However, while a spatial object is identified only by the coordinates, this is not the case for spatial entities. Different spatial entities might co-exist in the same coordinates (shops in a shopping mall), or the same entity might be located in different but nearby coordinates across different sources (e.g., "Chicago Roasthouse" appears in Yelp and Google Places with coordinates 82 meters apart). The identity of a spatial entity is the combination of several attributes. Unfortunately, the identity of a spatial entity is sometimes difficult to infer due to the inconsistencies within and among the sources; each location-based source contains different attributes; some attributes might be missing and even contradicting. For example, source A contains the spatial entity "Lygten" in (57.436 10.534) with the keywords "coffee",  "tea", and "cocoa and spices", while source B contains "Restaurant Lygten" in (57.435 10.533) with the keyword "restaurant". We need a technique that can automatically decide whether these two spatial entities are the same real-world entity. The problem of finding which spatial entities belong to the same physical entity is referred to as \emph{spatial entity linkage} or \emph{spatial entity resolution}. We use the term entity linkage since we do not merge the entities \cite{brizan2006survey}.  

There are several works that apply entity linkage in various fields \cite{shu2017user,yui2011survey,firmani2016online,maskat2016pay,efremova2015multi,firmani2016online,maskat2016pay,edwards2016sampling,goga2013exploiting,panchenko2015large,isaj2019profile} but only little work on spatial entities \cite{karam2010integration,berjawi2014representing,morana2014geobench,Isaj2019multi}, even though they are central in geo-related research. The entities in the majority of the entity linkage research refer to people; thus, the methodologies and the models are based on the similarities that two records of the same individual would reveal. 
Moreover, these works do not address the spatial character of spatial entities. As for the works in spatial entity integration \cite{karam2010integration,berjawi2014representing,morana2014geobench}, their main contribution is a tool rather than an algorithm. What is more, the methods propose arbitrarily attribute weights and score functions without experimentation nor evaluation. In contrast to \cite{karam2010integration,berjawi2014representing,morana2014geobench}, the skyline-based algorithm (\emph{SkyEx}) proposed in \cite{isaj2019profile} is free of scoring functions and semi-arbitrary weights, and achieves good results. However, \emph{SkyEx} is dependent on a threshold number of skylines $k$, which can only be discovered through experiments, as the authors do not provide methods for estimating $k$. To sum up, \emph{on the one hand, there is a growing amount of information about spatial entities, both within a single source and across sources, which can improve the quality of the geo-information; on the other hand, the spatial entity linkage problem is hard to resolve not only because of the heterogeneity of the data but also because of the lack of appropriate and effective methods}.

In this paper, we address the problem of spatial entity linkage across different location-based sources. We significantly extend a previous conference paper \cite{Isaj2019multi}. As an overall solution building on \cite{Isaj2019multi}, first, we propose a method that uses the geo-coordinates to arrange the spatial entities into blocks. Then, we pairwise compare the attributes of the spatial entities. Later, we rank the pairs according to their similarities using our novel technique, \emph{SkyRank}. Finally, we introduce three approaches (\emph{SkyEx-F}, \emph{SkyEx-FES}, \emph{SkyEx-D}) for deciding whether the pairs of compared entities belong to the same physical entity. Our  contributions are:
(1) we introduce \emph{QuadSky}, a technique for linking spatial entities and we evaluate it on real-world data from four location-based sources; (2) we propose an algorithm called \emph{QuadFlex} that organizes the spatial entities into blocks based on their spatial proximity, maintaining the complexity of a quadtree and avoiding assigning nearby points into different blocks; (3) to rank the pairs by their similarity, we propose a flexible technique  (\emph{SkyRank}) that is based on the concept of Pareto optimality; (4) to label the pairs, we propose the \emph{SkyEx-*} family of algorithms that considers the ranking order of the pairs and fixes a cut-off level to separate the classes; (5) we introduce two threshold-based algorithms: \emph{SkyEx-F} that uses the \textit{F-measure} to separate the classes, and \emph{SkyEx-FES}, an optimized version of \emph{SkyEx-F}, which provides a theoretical guarantee to prune 73\% of the skyline explorations of \emph{SkyEx-F}; (6) we propose \emph{SkyEx-D}, a novel algorithm that is fully unsupervised and parameter-free to separate the classes.

Contributions 1 and 2 originate from \cite{Isaj2019multi}, contributions 5 and 6 are new, and 3 and 4 are significantly improved compared to \cite{Isaj2019multi}. The work in \cite{Isaj2019multi} reported very good results compared to the baselines, but had the following limitation: the proposed threshold-based labeling algorithm \emph{SkyEx} needed the threshold number of skylines $k$ as input, and there were no proposed solutions on how to fix  $k$, apart from experimenting with different values.
We address this limitation by first modifying the original \emph{SkyEx} in \cite{Isaj2019multi} as to only rank and not label the pairs, and we refer to it as \emph{SkyRank}. Then, we delegate the classification problem to three new algorithms, namely \emph{SkyEx-F}, \emph{SkyEx-FES} and \emph{SkyEx-D}. The experiments in \cite{Isaj2019multi} attempt to fix $k$ using \emph{precision}, \emph{recall} and \emph{F-measure}. We now formalize this rationale in our novel \emph{SkyEx-F} algorithm. We improve further by providing a theoretical guarantee that \emph{SkyEx-F} can be stopped before exploring the whole dataset, and propose the optimized \emph{SkyEx-FES} that prunes 80\% of the skyline explorations of \emph{SkyEx-F}. Furthermore, we introduce a novel approach for estimating the number of skylines (\emph{SkyEx-D}), which is fully unsupervised and parameter-free and closely approximates the threshold-based versions (\emph{SkyEx-F} and \emph{SkyEx-FES}). In the present paper, we provide a new set of experiments for \emph{SkyEx-FES} and  \emph{SkyEx-D}, and compare with \emph{SkyEx-F}, supervised learning and clustering techniques.

The remainder of the paper is structured as follows: first, we describe the state of the art in Sect. 2; then, we introduce our approach in Sect. 3; later, we detail the stages of our approach: the spatial blocking in Sect. 4, comparing the pairs in Sect. 5, ranking the pairs in Sect. 6, and estimating the $k^{th}$ level of skyline in Sect. 7; we analyze the complexity of our solution in Sect. 8; we provide experiments in Sect. 9; and finally, we conclude in Sect. 10.

\section{Related Work}
In this section, we describe some works on \emph{entity resolution}, \emph{spatial data integration}, and \emph{spatial entity linkage}.

\textbf{Entity resolution.} The \emph{entity resolution} problem has been referred in the literature with multiple terms including \emph{deduplication}, \emph{entity linkage}, and \emph{entity matching} \cite{dong2013big,firmani2016online}. Entity resolution has been used in various fields such as matching profiles in social networks \cite{shu2017user}, bioinformatics data \cite{yui2011survey},  biomedical data \cite{christen2004febrl}, publication data \cite{firmani2016online,maskat2016pay}, genealogical data \cite{efremova2015multi}, product data \cite{firmani2016online,maskat2016pay}, etc. The attributes of the entities are compared, and a similarity value is assigned. The decision of whether to link two entities or not is usually based on a scoring function. However, finding an appropriate similarity function that combines the similarities of attributes and decides on whether to link or not the entities is often difficult. Several works use a training set to learn a classifier \cite{edwards2016sampling,peled2013entity,goga2013exploiting}, others base the decision on a threshold derived through experiments \cite{panchenko2015large,quercini2017liaison}.  Other approaches decide the include the uncertainty of a match into the decision \cite{magnani2010survey}. Finally, matching the entities can also be based on the feedback of an oracle \cite{maskat2016pay,firmani2016online} or of a user \cite{maskat2016pay}.

\textbf{Spatial data integration.} There are several works on integrating purely spatial objects. Spatial objects differ from spatial entities mainly because a spatial object is fully determined by its coordinates or its spatial shape whereas a spatial entity, in addition to being geo-located, has a well-defined identity (name, phone, categories). The works on spatial object integration aim to create a unified spatial representation of the spatial objects from single/multiple sources. Schafers at al \cite{schafers2014simmatching} integrate road networks using rules for detect matching and non-matching roads based on the similarity in terms of the length, angles, shape, as well as the name of the street if available. The solutions in \cite{abdalla2016geospatial,tabarro2017webgis,balley2004modelling,walter1999matching} are purely spatial and discuss the integration of spatial objects originating from sensors and radars to have a better representation of the surface in 2D or even in 3D. These approaches cannot apply to spatial entities.

\textbf{Spatial entity linkage}. Accommodating the challenges of spatial entities for the entity resolution problem has been specifically addressed in \cite{karam2010integration,berjawi2014representing,morana2014geobench,sehgal2006entity,raimond2008data,Isaj2019multi}. The work in \cite{sehgal2006entity} is a bridge between the works in spatial data integration and spatial entity linkage because the entities have names, coordinates, and types but similarly to spatial objects, they refer to landscapes (rivers, deserts, mountains, etc.). The method used in  \cite{sehgal2006entity} is supervised and requires labeled data. Moreover, even the similarity of the attribute "type" is learned through a training set. Regarding \cite{karam2010integration,berjawi2014representing,morana2014geobench}, the main contribution of these works relies on designing a spatial entity matching tool rather than an integration algorithm. In \cite{morana2014geobench}, the spatial entities within a radius are compared with each other, and the value of the radius is fixed depending on the type of spatial entity. For example,  the radius is 50 m for restaurants and hotels, but 500 m for parks. All attributes (except coordinates) are compared using the Levenshtein similarity. Since the name, the geodata and the type of the entity are always present, they carry two-thirds of the weight in the scoring function whereas the weights of the website, the address and the phone number are tuned to one-third. The prototype of the spatial entity matching in \cite{berjawi2014representing} relies on a technique that arbitrarily uses an average of the similarity scores of all textual attributes without providing a discussing on this choice. Similarly to \cite{karam2010integration,berjawi2014representing}, the main contribution of the work in \cite{morana2014geobench} is designing a tool for spatial entity integration. The underlying algorithm considers spatial entities that are 5 m apart from each other and compares the name of the entities syntactically and the metadata related to an entity semantically. Finally, the decision is taken using the belief theory \cite{raimond2008data}. The works in \cite{karam2010integration,berjawi2014representing,morana2014geobench} lack an evaluation of the algorithms. The work in \cite{Isaj2019multi} proposes a scalable spatial quadtree-based blocking technique that not only fixes the distance between the spatial entities but also controls the density of the blocks. Then, the spatial entities of the same block are compared on their name (Levenshtein), address (custom) and categories (Wu\&Palmer using Wordnet). Finally, a threshold-based algorithm (\emph{SkyEx}) is used to separate the classes. However, instead of using fixed thresholds for each attribute similarity, \emph{SkyEx} abstracts the similarities into skylines and needs only one threshold number of skylines $k$ to separate the classes. The authors provide experiments and evaluations, nevertheless, they lack estimation techniques for fixing $k$. The present paper uses the solution in \cite{Isaj2019multi} for the spatial blocking and the pairwise comparisons. We use the skylines for the labeling process as in \emph{SkyEx}, but we propose three new algorithms (\emph{SkyEx-F}, \emph{SkyEx-FES} and \emph{SkyEx-D}) to separate the classes, fixing $k$ internally.

\textbf{Summary.} The general entity resolution approaches propose interesting solutions, but they do not consider the spatial character of a spatial entity. The majority are designed to match entities that represent individuals (profiles in social networks, authors and publications, medical records, genealogical connections, etc.) or even linking species in nature. The proposed solutions for entity resolution in individuals, either supervised or based on an experimental threshold, are learned on human entity datasets. One can not merely assume the resemblance of behaviors in a human entity dataset to a spatial entity one. The solutions in species in nature are based on domain-specific algorithms that have little to no applicability in other fields. There is little specific work in spatial entities \cite{karam2010integration,berjawi2014representing,morana2014geobench}, mostly focusing on a tool for spatial data integration rather than on the algorithm. 
In all these works, the scoring function is chosen arbitrarily and no evaluation provided.



%

\section{Spatial Entity Linkage}

In this section, we introduce the problem definition and our overall solution.
The basic concept used in this work is a \emph{spatial entity} such as places, businesses, etc. Spatial entities originate from location-based sources, e.g., directories with location information (yellow pages, Google Places, etc.)  and location-based social networks (Foursquare, Gowalla, etc.).

\begin{definition}
A spatial entity $s$ is an entity identified uniquely within a source $I$, located in a geographical point $p$ and accompanied by a set of attributes $A=\{a_i\}$.
\end{definition}

The attributes connected to $s$ can be categorized as: 
\emph{spatial}: the point where the entity is located, expressed in longitude and latitude; \emph{textual}: attributes that are in the form of text such as name, address, website, description, etc.; 
\emph{semantic}: attributes in the form of text that enrich the semantics behind a spatial entity, e.g., categories, keywords, metadata, etc.;
\emph{date, time or number}: other details about a spatial entity such as phone, opening hours, date of foundation, etc. An example of a spatial entity originating from Yelp can be a place named "Star Pizza" in the point (56.716 10.114), with the keywords "pizza, fast food", and with address "Storegade 31". The same spatial entity can be found again in Yelp or other sources, sometimes having the same attributes, more, less, or even attributes with contradictory values. Thus, there is a need for an approach that can unify the  information within and across different sources in an intelligent manner.

\emph{Problem definition: } \emph{Given a set of spatial entities $S$ originating from multiple sources, the spatial entity linkage problem aims to find those pairs of spatial entities  $\langle s_i,s_j \rangle$  that refer to the same physical spatial entity. }



\begin{figure*}[htb]
 \centering 
 
 \includegraphics[width=0.9\linewidth]{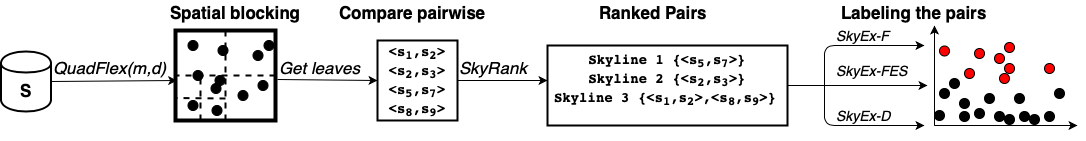}
    \caption{QuadSky approach} 
   \label{fig:quadsky}

 \end{figure*}
 
We propose \emph{QuadSky}, a solution based on a quadtree data partitioning and skyline exploration.  The overall approach is detailed in Fig.~\ref{fig:quadsky}. \emph{QuadSky} consists of four main parts: spatial blocking (\emph{QuadFlex}), pairwise comparisons, ranking the pairs (\emph{SkyRank}), and labelling the pairs (the \emph{SkyEx-*} family of algorithms). $S$ contains all spatial entities. We propose \emph{QuadFlex}, a quadtree-based solution that can perform the spatial blocking by respecting the distance between spatial entities and the density of the area. The output of  \emph{QuadFlex} is a list of leaves with spatial entities located nearby. Within the leaves, we perform the pairwise comparisons of the attributes. Then, we rank the compared pairs based on the skylines (concepts detailed in Sect. \ref{sub:skyex})   using the \emph{SkyRank} algorithm. In order to decide which pairs dictate a match and which not, we propose the \emph{SkyEx-*} family of algorithms (\emph{SkyEx-F}, \emph{SkyEx-FES}, and \emph{SkyEx-D}) that finds which skyline level best separates the pairs that refer to the same physical spatial entity (the positives class) from the rest (the negative class). In the following sections, we detail each of the phases of $QuadSky$. We use the notations in Table \ref{tab:notations} (We will explain them gradually during the paper). 

\begin{table}[t]
\centering
\scriptsize
\caption{Notations used throughout the paper}
\begin{tabular}{@{}ll@{}}
\toprule
\textbf{Notation}        & \textbf{Description}                                                        \\ \midrule
$s$                        & A spatial entity with a point $p$ and a set of attributes $\{a_i\}$            \\
$S$ & A set of spatial entities $\{s_i\}$ \\
$Q$ & A \emph{QuadFlex} structure used for spatial blocking \\
$P$                        & A set of pairs $\{\langle s_i, s_j \rangle\}$ \\
$\delta_a$ & The similarity of a pair in terms of attribute $a$ \\                           
 $u(\langle s_i,s_j\rangle)$ & The utility of a pair $\langle s_i,s_j\rangle$
\\
$Skyline(k)$ & A skyline of pairs $\{\langle s_i,s_j\rangle\}$ in the level $k$ 
\\

$K$ & The total number of skylines \\
$k$ &  A variable indicating the level of skyline
\\
$k_f$ &  A $k$ value fixed by \emph{SkyEx-F} and \emph{SkyEx-FES}
\\

$k_d$ & A $k$ value fixed by \emph{SkyEx-D}

\\

$P_k$ & Pairs of $P$ associated with a skyline \\

$P^+$ & A subset of pairs in $P$ classified as positive
\\

$P^-$ & A subset of pairs in $P$
classified as negative
\\

$F1(k)$ & The F-measure in the
$k^{\textit{th}}$ level of skyline
\\
$\mu_d$ & The mean of the distances between the two classes. 
\\
$\mu_d(k)$ & The function measuring $\mu_d$ in in each $k$ level of skylines
\\
 $\mu'_d(k)$ & The first derivative of $\mu_d(k)$

\\ \bottomrule
\end{tabular}
\label{tab:notations}
\end{table}

\section{Spatial Blocking}

\label{sub:spatialblocking}

Since spatial proximity is a strong indicator of finding a match, the first step is to group nearby spatial entities in blocks. Several generic blocking techniques have been discussed in \cite{papadakis2012beyond,papadakis2016comparative}, but mostly based on textual attributes and not applicable to spatial blocking. We propose a quadtree-based solution (\emph{QuadFlex}) that uses a tree data structure but also preserves the spatial proximity of spatial entities. 
A quadtree is a tree whose nodes are always recursively split into four children when the capacity is filled \cite{samet1984quadtree}. 
After the quadtree is constructed, the points that fall in the same leaf are nearby spatially. Hence, these leaves are good candidates to be spatial blocks. However, the existing quadtree algorithm needs to be adapted for spatial blocking. 

First, a quadtree needs a capacity (number of points) as a parameter. The capacity is not a meaningful parameter for spatial blocking, while the density of the area is a better candidate. For example, if the area is too dense (e.g., city center), even though the capacity is not reached, a further split would be more beneficial. 
On the contrary, two points in the countryside (e.g., a farm) might be farther apart, but they still might be the same entity. Second, a quadtree does not limit the distance between points. Even though two points might be in an area that respects the density, if they are quite distant from each other, it is not necessary to compare them. The maximal distance between two points in a child is the diagonal of the area (all quadtree children are rectangular). We used $m$, the diagonal of an area, as a parameter that controls the distance of points rather than comparing all distances between all spatial entities.
Finally, a quadtree splits into four children, and sometimes nearby points might fall into different leaves.  We modify the procedure of the assignment of the points into a child by allowing more than one assignment.

\begin{figure}[t]
 
    \centering
 \includegraphics[width=0.7\linewidth]{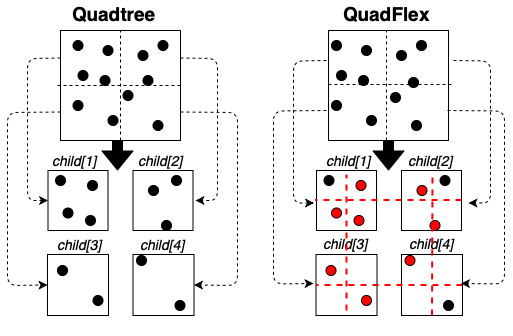}

  \caption{QuadFlex versus quadtree} 
  \label{fig:quads}

 \end{figure}

\begin{algorithm} [htbp]
\caption{QuadFlex algorithm} 
\label{alg:quadflex}
\footnotesize
\begin{algorithmic}[1]
\INPUT A set of entities  $S=\{s_i\}$, diagonal $m$, density $d$ \\ 
\OUTPUT The leaves QuadFlex  $Q$ $Q.\textit{leaves()}$ ;

\STATE Create $Q(m, d)$ where 
$Q$ has the dimensions of the bounding box of $S$

\FOR{\textbf{each} $s$ in $S$}
\STATE $Q.\textit{insert(s)}$ \COMMENT{Insert s into the QuadFlex}
\ENDFOR

\RETURN{\ $Q.\textit{leaves()}$}

\

\textbf{Method}\emph{ insert (s)} 
\IF{$\textit{this.children} \neq \empty$} 
\STATE $\textit{Indexes} \gets$ $\textit{getIndex(s)}$ \COMMENT{Find where $s$ belongs}
\FOR{\textbf{each} $i$ in $\textit{Indexes}$}
\STATE $\textit{this.child[i].insert(s)}$ \COMMENT{Insert $s$ to the children it belongs}
\ENDFOR
\ENDIF

\IF{$\textit{this.diagonal} > m$ $\OR$ $\textit{this.density} > d$} 
\STATE Split the current object $\textit{this}$ into 4 children 
\ENDIF
\STATE $\textit{Indexes} \gets$ $\textit{getIndex(s)}$ 
\FOR{\textbf{each} $i$ in $\textit{Indexes}$}
\STATE $\textit{this.child[i].insert(s)}$
\ENDFOR
\\
\RETURN

\

\textbf{Method}\emph{ getIndex (s)} 

\STATE Let $\textit{vertical-left}$  and $\textit{vertical-right}$  be the lines that pass at 0.25 and 0.75 of the width of $\textit{this}$, respectively
\STATE Let $\textit{horizontal-up}$  and $\textit{horizontal-down}$  be the lines that pass at 0.25 and 0.75 of the height of $\textit{this}$, respectively
\IF {$s$ is left of $\textit{vertical-right}$ and above $\textit{horizontal-down}$  }
\STATE $\textit{Indexes.add(1)}$  \COMMENT{ $s$ fits in  $\textit{child[1]}$}
\ENDIF
\IF{$s$ is right of $\textit{vertical-left}$ and above $\textit{horizontal-down}$ }
\STATE $\textit{Indexes.add(2)}$ \COMMENT{ $s$ fits in  $\textit{child[2]}$}
\ENDIF
\IF{$s$ is left of $\textit{vertical-right}$ and below $\textit{horizontal-up}$ }
\STATE $\textit{Indexes.add(3)}$
\COMMENT{ $s$ fits in  $\textit{child[3]}$}
\ENDIF
\IF{$s$ is right of $\textit{vertical-left}$ and below $\textit{horizontal-up}$ }
\STATE $\textit{Indexes.add(4)}$ \COMMENT{ $s$ fits in  $\textit{child[4]}$}
\ENDIF
\\
\RETURN{ $\textit{Indexes}$}

\end{algorithmic}
\end{algorithm} 
Fig.~\ref{fig:quads} shows the modifications that we do to the construction of the traditional quadtree for our version \emph{QuadFlex}. The traditional quadtree divides the area of each parent into four smaller areas, the children. A point belongs only to one child. In our modification, the area will split into 4 children in the same way as a quadtree (at 0.5 of the height and 0.5 of the width of the parent), but when we assign a point to a child, we will consider including points that fall shortly outside the border in the current child, too. For example, in Fig. 2, \emph{QuadFlex} physically splits in the same way as the quadtree, but the red dashed line shows the area that will be considered for including neighboring points. The red points are in the overlapping regions and will be included in more than one child.  
Algorithm \ref{alg:quadflex} details the procedure for retrieving the spatial blocks with \emph{QuadFlex}. The algorithm creates the root of the \emph{QuadFlex} tree with the bounding box of the data and parameters $m$ and $d$ (line~1). Then, it inserts each spatial entity into the \emph{QuadFlex} (line~3) and finally returns its leaves. The methods $insert(s)$ and $getIndex(s)$ are self calls on the \emph{QuadFlex} object ($\textit{this}$). The insertion procedure is similar to the traditional quadtree except that the constraint is not the capacity but the diagonal of the area $m$ (maximal distance between points) and the density of the area $d$. Hence, if the diagonal of the \emph{QuadFlex} is more than the distance $m$ or the density is larger than our defined value $d$ (line~12), the \emph{QuadFlex}, similarly to a quadtree, will split into four children. However, in contrast to the traditional quadtree, a spatial entity might belong to more than one child. The method $getIndex(s)$ gets the list of indexes of the children where the new point will be assigned. 
\emph{Even though $Q$ splits into 4 children in the same way as a quadtree, the lines $\textit{vertical-left}$, $\textit{vertical-right}$, $\textit{horizontal-up}$,  and $\textit{horizontal-down}$ allow a logical overlap of the areas and thus, neighboring spatial entities will not be separated}.

\section{Pairwise Comparisons}
\label{sub:pairwisecomp}

After the spatial blocking, we perform a pairwise comparison of spatial entities that fall in the same leaf. Next, we describe the metrics for different types of attributes.

\textbf{Textual Similarity. }
We measure the textual similarity of spatial entities using the edit distance between the words. The Levenshtein distance \cite{levenshtein1966binary} between string $s_1$ and string $s_2$ $d(s_1,s_2)$ is the number of  edits (insertion, deletion, change of characters) needed to convert string $s_1$ to string $s_2$. We define the similarity as:
\begin{equation}
\textit{TextSim} (s_1,s_2) = 
  (1 -  \frac{d(s_1,s_2)}{\textit{max}(|s_1|, |s_2|)}) 
\end{equation}
\begin{example}
Let us consider "Skippers Grill" and "Skippers Grillbar". The Levenshtein distance to convert "Skippers Grill" to "Skippers Grillbar" is 3 (3 insertions). The lengths of the first and the second string are 14 and 17 respectively. So,  $\textit{TextSim}(\text{"Skippers Grill"}$, $\text{"Skippers Grillbar"})$ $= 1-(3/\textit{max}(14,17) = 0.8235$. 
\end{example}

Note here that not all textual attributes can be handled similarly. String similarity metrics are usually appropriate for attributes like names, usernames, etc. Some other textual attributes require other metrics that need to be customized. In this paper, we consider the address as a specific textual attribute. The similarity between two addresses cannot be measured with Levenshtein, Jaccard, Cosine, etc. since a small change in the address might be a giant gap in the spatial distance between the entities. For example, "Jyllandsgade 15 9480 L{\o}kken" and "Jyllandsgade 75 9480 L{\o}kken" have a distance of 1 and Levenshtein similarity of 0.963, but they are 650 meters apart. However, "Jyllandsgade 15 9480 L{\o}kken" and "Jyllandsgade 15 9480 L{\o}kken Denmark" have a distance of 8 and Levenshtein similarity of 0.772, but they are the same building. In \cite{karam2010integration,berjawi2014representing} the address is considered as another textual attribute. In our case, we perform some data cleaning (removing commas, punctuation marks, lowercase, etc.), and then we search for equality or inclusion of the strings. We assign a similarity of 1.0 in the case of equality, 0.9 in the case of inclusion, and 0.0 otherwise.

\textbf{Semantic Similarity.} 
The similarity of fields like categories, keywords, or metadata cannot be compared only syntactically. Sometimes, several synonyms are used to express the same idea. Thus, we need to find a similarity than considers the synonyms as well. We use Wordnet \cite{fellbaum2010wordnet} for detecting the type of relationship between two words and Wu\& Palmer similarity measure ($\textit{wup}$) \cite{wu1994verbs}. 
The semantic similarity between two spatial entities is the maximal similarity between their list of categories, keywords, or metadata. The semantic similarity of the spatial entities $s_1$ and $s_2$ is:
\begin{equation}
    \textit{SemSim}(s1,s2) = \textit{max} \{\textit{wup}(c_i,c_j)\}
\end{equation}
where $c_i \in C_1$ and $c_j \in C_2$ and $C_1$ is the set of keywords of $s_1$ and $C_2$ is the set of keywords $s_2$.

\begin{example}
Let us take an example of two spatial entities $s_1$ and $s_2$ and their corresponding semantic information expressed as keywords $C_1=\{\text{"restaurant"}, \text{"italian"}\}$ and $C_1=\{\text{"food"}, \text{"pizza"}\}$. The similarity between each pair is $\textit{wup}(\text{"restaurant"},\text{"food"})=0.4$, $\textit{wup}(\text{"italian"}, \text{"food"})=0.4286$, $\textit{wup}$$(\text{"restaurant"}, \text{"pizza"})=0.3333$ and $\textit{wup}$$(\text{"italian"}, \text{"pizza"})=0.3529$. Finally, the semantic similarity of $s_1$ and $s_2$ is $ \textit{SemSim}(s1,s2) = \textit{max}\{0.4$, $0.4286$, $0.3333$, $0.3529\}=0.4286$.
\end{example}

\textbf{Date, Time, or Numeric Similarity. } The similarity between two fields expressed as numbers, dates, times or intervals is a boolean decision (true or false). 
Even though the similarity of these fields relies only on an equality check, most of the effort is put in data preparation. For example, the different phone formats should be identified and cleaned from prefixes. Other data formats like intervals (opening hours) might require temporal queries for similarity, inclusion, and intersection of the intervals. In this paper, we do not compute the similarity between these attributes as we use them to construct the ground truth.


\section{Ranking the Pairs}

\label{sub:skyex}

After the pairwise comparison, the pairs have $n$ similarity values, one for each attribute. We denote as $\delta_a$ the similarity of two spatial entities for attribute $a$. For example, a pair $\langle s_1, s_2 \rangle$  is represented as  $\{\delta_{a_1},..., \delta_{a_n}\}$. The problem that we need to solve is which $\langle s_i, s_j \rangle$ pairs indicate a strong similarity to be considered for a match. The related work solutions propose using a classifier \cite{edwards2016sampling,goga2013exploiting,kopcke2010evaluation} or experimenting with different thresholds \cite{kopcke2010evaluation,panchenko2015large,quercini2017liaison}.
We propose a more relaxed technique that uses Pareto optimality  \cite{censor1977pareto} for filtering the positive class. \emph{ A solution  $(x,y)$ is Pareto optimal when no other solution can increase $x$ without decreasing $y$}. The points in the same Pareto frontier or skyline have the same utility.  Widely used in economics and multi-objective problems,  Pareto optimality is free of weights and similarity score functions.  In the context of entity resolution, the skylines provide a selection of points that are better than others, but without quantifying how much better. The pairs that refer to the same physical spatial entity (the positive class) are expected to have high values of $\delta$, and consequently, form the first skylines.  
Under the assumption that the best values of $\delta$ belong to the pairs from the positive class, we label the pairs up to the $k^{th}$ skyline as the positive class and the rest as the negative. \emph{To the best of our knowledge, we are the first to propose a Pareto optimal solution for detecting matches for an entity linkage problem}.

\begin{definition}
An attribute $a$ is positive discriminating if its similarity $\delta_a$ indicates a positive class rather than a negative. 
\end{definition}

An example of a positive discriminating attribute is the similarity of name. A higher name similarity is more likely to indicate a match than a non-match. 
For example, the name similarity for \emph{Mand \& Bil} and \emph{Mand og Bil} is 0.75, and for \emph{Solid} and \emph{Sirculus ApS} is 0.16 . Hence, the former pair has a higher probability of being a match than the second. Examples of negative discriminating attributes are the edit distance between two names. If the distance between the names is high, then the pairs are less likely to be a match.

\begin{definition}
The utility of a positive discriminating attribute $a$, denoted as $u_a$, is the contribution of the attribute similarity $\delta_a$ to reveal a match, using Pareto Optimality ($\delta_a  \xmapsto{\textit{Pareto Optimality}} u_a$). 
\end{definition}

Each attribute similarity contributes to the labeling problem. Intuitively, a higher similarity $\delta_a$ of $a$ has a higher utility than a lower value of $\delta_a$. Hence, if $\delta_a(\langle s_1,s_2\rangle) > \delta_a(\langle s_3,s_4\rangle) $, then $u_a(\langle s_1,s_2\rangle) > u_a(\langle s_3,s_4\rangle)$.

\begin{definition}
The utility of a pair denoted as $u(\langle s_i,s_j\rangle)$ is sum of the utilities of each of the attributes.  
$u(\langle s_i,s_j\rangle) = \sum_{i=1}^n u_{a_i}$.
\end{definition}

Note that the utility of a pair is not the sum of the similarities of the attributes ($u(\langle s_i,s_j\rangle) \neq \sum_{i=1}^n \delta_{a_i}$) but the sum of their utilities ($u(\langle s_i,s_j\rangle) = \sum_{i=1}^n u_{a_i}$). Nevertheless,  $u(\langle s_i,s_j\rangle) = \sum_{i=1}^n \delta_{a_i} =  \sum_{i=1}^n u_{a_i}$ is a specific case.

\begin{definition}
A skyline of level $k$, $\textit{Skyline}(k)$, is the collection of  pairs  $\langle s_i,s_j\rangle$ of equal utility such that $u_{\textit{Skyline}(k)} > u_{\textit{Skyline}(k+1)}$.
\label{def:skyline}
\end{definition}

Obviously, $\textit{Skyline}(1)$ is the Pareto optimal frontier with the best values of $\delta_a$. In order to continue with $\textit{Skyline}(2)$, the points of   $\textit{Skyline}(1)$ are removed, and the frontier is calculated again. Every time we explore level $k$, the values in $\textit{Skyline}(k)$ are the ones with the highest utility. This means that \emph{there is no other point in a lower level that can bring a higher utility to the positive class}. This procedure continues until all the pairs are ranked according to their skyline. Algorithm \ref{alg:pareto} formalizes our proposed procedure Skyline Ranking (\emph{SkyRank}) for ranking the pairs. The input is the set of pairs $P$ produced from the \emph{QuadFlex} blocking technique and the number of skyline levels $k$ that we will explore. We find the points with the best combinations of $\delta$ that dominate the rest of the points and, consequently, have a higher utility (line~3). Then, we put these points in $P_k$, which keeps the explored skylines and remove them from $P$ (line~5).  We stop when all the pairs are assigned to a skyline. 
\begin{algorithm} [htbp]
\caption{Skyline Ranking (SkyRank) } 
\label{alg:pareto} 
\footnotesize
\begin{algorithmic}[1]
\INPUT A set of pairs  $P=\{\langle s_i, s_j \rangle \}$ \\ 
\OUTPUT A set of pairs and their skyline $P_k=\{\langle s_i, s_j \rangle, k \}$ ;
\STATE $P_k \gets \emptyset$
\WHILE{ $\lvert P_k \rvert < \lvert P \rvert$ }
\STATE Filter  $\textit{Skyline}(k)=\{ \langle s_i, s_j \rangle \}$  $\lvert$ $\forall \langle s', s'' \rangle \in P - \{ \langle s_i, s_j \rangle \} $ , $ u(\langle s_i, s_j \rangle) > u \langle s', s'' \rangle \}$  \COMMENT{Find the Skyline }

\STATE Add $\textit{Skyline}(k)$ to $P_k$ \COMMENT{Move the skyline to $P_k$}
\STATE $P=P-\textit{Skyline}(k)$

\ENDWHILE

\RETURN{ $P_k$}

\end{algorithmic}
\end{algorithm}

After obtaining the ranking, we can assume that the pairs of the first few skylines are more likely to refer to the same physical entity than the rest.

\begin{assumption}
The probability that a pair is labeled positive is inversely proportional to its skyline level. 
\end{assumption}

The assumption considers that for all $ { \langle s_i,s_j \rangle \text{and} \langle s'_{i},s'{j} \rangle}$ in $P$ such that $\langle s_i,s_j \rangle \in \textit{Skyline(k)}$,
$\langle s'_{i},s'_{j} \rangle \in \textit{Skyline(k')}$ and $k<k'$, then $ \langle s_i,s_j \rangle$ is more likely to be a match than $ \langle s'_i,s'_j \rangle$. 

\section{Estimating k}

In this section, we estimate the skyline level $k$ that separates the positive from the negative class. We introduce two different methods for fixing the value of $k$: \emph{threshold-based} (\emph{SkyEx-F} and \emph{SkyEx-FES}) and \emph{unsupervised} (\emph{SkyEx-D}).

\subsection{{SkyEx-F and SkyEx-FES}} 

In contrast to the threshold-based methods used in entity resolution problems \cite{berjawi2014representing, morana2014geobench,quercini2017liaison} where we have to find a threshold for each similarity of the attributes and then a threshold for the similarity function that aggregates the similarity scores, we have simplified our problem to only one parameter: $k$. We need to find the value of $k$ that best separates the classes. As a measure of a "good model", we choose to use the \textit{F-measure}, given that our data tends to be unbalanced \cite{weiss2000learning, cieslak2008learning, zhang2004learning}. In the context of our problem, we define true positives $\textit{TP}$ as pairs that refer to the same physical entity and are correctly labeled as positives; true negatives $\textit{TN}$ as pairs referring to different physical entities and are correctly labeled as negatives; false positive $\textit{FP}$ as pairs that do not refer to the same physical entities but are wrongly labeled as positives; $\textit{FN}$ as  pairs that refer to the same physical entity but are wrongly labeled as negatives. Thus, the precision is $\textit{p} = \frac{\textit{TP}}{\textit{TP}+\textit{FP}}$, the recall is $\textit{r}=\frac{\textit{TP}}{\textit{TP}+\textit{FN}}$ and $\textit{F-measure} \ (F1)=2 \frac{\textit{p} * \textit{r}}{\textit{p} + \textit{r}}$.

\begin{subalgorithms}
\begin{algorithm} [htp]

\caption{SkyEx-F} 
\label{alg:skyex-f}

\begin{algorithmic}[1]
\footnotesize

\INPUT A set of pairs  $P=\{\langle s_i, s_j \rangle \}$ \\ 
\OUTPUT A set of positive pairs $P^+$ and a set of negative pairs $P^-$  ;
\STATE $P_k \gets \emptyset$, $F \gets \emptyset$
\setcounter{ALC@line}{1}

\WHILE{ $\lvert P_k \rvert < \lvert P \rvert$ }

\skipnumber{ Lines 3-5 as Algorithm \ref{alg:pareto} ... }
\setcounter{ALC@line}{5}

\STATE $P^+ \gets P_k$
\STATE $P^- \gets P $
\STATE Calculate $F1(k)$
\STATE Add $\langle k, F1(k) \rangle$ to $F$

\ENDWHILE

\STATE Find $k_f$ such that $F1(k_f) = max(F1(k))$  $\forall k \in \{ 1, \lvert F \rvert \}$

\STATE $P^+ \gets \bigcup_{k=1}^{k_f} Skyline(k)$

\STATE $P^- \gets P - P^+$

\RETURN{ $P^+, P^-$}
\end{algorithmic}

\end{algorithm}

The higher the $k$, the more unlikely it is for a pair in the $k^{th}$ skyline to belong to the positive class (Assumption~1). \emph{SkyEx-F} explores the first skylines and stops at the value of $k=k_f$ that achieves the highest $\textit{F-measure}$. To find $k_f$, we rank the pairs as in Algorithm \ref{alg:pareto}, but we add some extra calculations within the loop (lines 6-9) and find the optimal $k_f$ (line 7) in Algorithm \ref{alg:skyex-f}. \emph{SkyEx-F} calculates the \textit{F-measure} for each skyline $k$ by considering the pairs up to the $k^{th}$ skyline as positive and the rest as negative. We add $F1(k)$ to the set $F$, which keeps track of the evolution of \textit{F-measure} while exploring more skylines. We find $k_f$ as the value of $k$ that achieves the highest \textit{F-measure} in $F$. The pairs from the first to the $k_f$ level of skyline are labeled as positive and the rest as negative. Note that \emph{SkyEx-F} explores all the skylines and then, finds the threshold $k_f$. However, we can optimize Algorithm \ref{alg:skyex-f} by stopping at $k_f$ before going through the full dataset $P$. 
Let us highlight some properties of \textit{p} and \textit{r}.

\begin{property}
The recall is a monotonically non-decreasing function with respect to the number of skylines $k$.
\end{property}

\begin{proof}
The recall after $k$ skylines is $\textit{r (k)}=\frac{\textit{TP(k)}}{\textit{TP(k)}+\textit{FN(k)}}$. While we move to the next, $k+1^{th}$ skyline, we label more pairs as positive, so the probability of finding true positives  $\textit{TP}$ is higher. Thus, $\textit{TP(k+1)}\geq \textit{TP(k)}$. As for the denominator, it is always the same despite the skyline level because the true positives are fixed in $P$ and are independent of our labelling. This means that if we find more true positives ($\textit{TP}$), then we automatically decrease the false negatives ($\textit{FN}$). Hence, $\textit{TP(k+1)}+\textit{FN(k+1)}=\textit{TP(k})+\textit{FN(k)}$. We can then show that $\frac{\textit{TP(k+1)}}{\textit{TP(k+1)}+\textit{FN(k+1)}}\geq\frac{\textit{TP(k)}}{\textit{TP(k)}+\textit{FN(k)}}$ so $\textit{r (k+1)} \geq \textit{r (k)}$.
\end{proof}

\begin{property}
Given Assumption 1, the precision is a monotonically non-increasing function with respect to the number of skylines $k$.
\end{property}

The precision is $\frac{\textit{TP}}{\textit{TP}+\textit{FP}}$. However, $\textit{TP}+\textit{FP}$ is what our algorithm labels as positive, which means all the pairs belonging to skylines up to the $k^{th}$ level. According to Assumption 1, $\textit{FP}$ increase at a higher rate than $\textit{TP}$ while moving to higher $k$ values. A proof of monotonic decreasing precision for systems that rank the results considering their relevance (like our skylines) can be found in \cite{gordon1989recall}.

\begin{theorem}
The F-measure function with respect to the number of skylines $k$ is increasing until a point or interval, and after that, it cannot increase again.
\end{theorem}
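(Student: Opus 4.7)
The plan is to reduce the theorem to a one-step comparison between $F1(k)$ and $F1(k+1)$, and then show by induction that once $F1$ fails to grow it never grows again. The first step is to rewrite $F1$ in a more tractable closed form. Let $N = TP(k)+FN(k)$, which is constant in $k$ since the set of real matches is fixed by the ground truth, and let $L(k) = TP(k)+FP(k)$ denote the total number of pairs labelled positive through skyline $k$. Substituting $p(k) = TP(k)/L(k)$ and $r(k) = TP(k)/N$ into $F1(k) = 2p(k)r(k)/(p(k)+r(k))$ collapses the expression to
\begin{equation*}
F1(k) \;=\; \frac{2\, TP(k)}{N + L(k)}.
\end{equation*}

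Next, I would express the step-to-step change in $F1$ in terms of the newly added skyline. Let $T_{k+1}$ and $F_{k+1}$ count the true and false positives in $\textit{Skyline}(k+1)$, and define the local precision of that skyline as $\pi_{k+1} = T_{k+1}/(T_{k+1}+F_{k+1})$. Plugging into the closed form above and simplifying gives the criterion
\begin{equation*}
F1(k+1) > F1(k) \;\Longleftrightarrow\; \pi_{k+1} > F1(k)/2.
\end{equation*}
A second, more refined, observation is that $F1(k+1)/2$ is exactly a weighted average of $F1(k)/2$ and $\pi_{k+1}$ with positive weights $N+L(k)$ and $T_{k+1}+F_{k+1}$. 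Consequently $F1(k+1)/2$ always lies between $\pi_{k+1}$ and $F1(k)/2$, and in particular $F1(k+1)/2 \ge \pi_{k+1}$ whenever $\pi_{k+1} \le F1(k)/2$.

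The heart of the argument, and the step I expect to be the most delicate, is the induction that closes the theorem. Suppose $F1(k+1) \le F1(k)$, which by the criterion above is equivalent to $\pi_{k+1} \le F1(k)/2$; the weighted-average bound then yields $F1(k+1)/2 \ge \pi_{k+1}$. Assumption~1, the same ingredient that underlies Property~2, implies that the local precisions $\pi_\ell$ are non-increasing in $\ell$, so $\pi_{k+2} \le \pi_{k+1} \le F1(k+1)/2$, and applying the criterion again gives $F1(k+2) \le F1(k+1)$. A straightforward induction then establishes that $F1$ is non-increasing at every index beyond the first step at which it fails to grow, so the peak is either a single point or a plateau interval, after which $F1$ can never rise again. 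The subtlety I would need to formalize carefully is the monotone non-increasing behaviour of $\pi_\ell$: Assumption~1 is stated informally, but the induction requires the precise reading that the fraction of true matches within $\textit{Skyline}(\ell)$ weakly decreases with $\ell$, which is exactly the ingredient that closes the argument.
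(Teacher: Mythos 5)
Your proof is correct, and it takes a genuinely different route from the paper. The paper argues by contradiction: it posits a second optimum $F1(k+\delta)>F1(k)$ after the peak, rewrites $F1 = 2/\bigl(\tfrac{1}{p}+\tfrac{1}{r}\bigr)$, and tries to derive an impossibility from the global monotonicity of precision (Property~2) and recall (Property~1) alone. You instead work directly with counts: the closed form $F1(k)=2\,TP(k)/(N+L(k))$, the one-step criterion $F1(k+1)>F1(k)\iff \pi_{k+1}>F1(k)/2$, the mediant (weighted-average) property of $F1(k+1)/2$, and an induction driven by the non-increasing local precisions $\pi_\ell$. What your route buys is precision about the hypothesis: it isolates exactly which quantitative reading of Assumption~1 is needed, namely that the \emph{per-skyline} match rate $\pi_\ell$ is non-increasing in $\ell$. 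This is genuinely stronger than Property~2's statement that cumulative precision is non-increasing --- one can construct count sequences with non-increasing $p$ and non-decreasing $r$ whose $F1$ dips and then rises again, so the global properties alone do not suffice. The paper's contradiction step in fact founders at exactly this point (it concludes that $r(k+\delta)>r(k+\epsilon)$ "cannot hold because $r(k+\delta)\geq r(k+\epsilon)$," which is not a contradiction), whereas your induction, by leaning explicitly on the monotone $\pi_\ell$, closes the gap cleanly. The one item you rightly flag --- that Assumption~1 must be read as monotone per-skyline precision --- is the natural reading of "the probability that a pair is positive is inversely proportional to its skyline level," so your argument stands as a more careful and complete proof of the theorem than the one in the paper.
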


\begin{proof}
Let us suppose that while moving deeper into the skylines, we found a peak point $k$ or peak interval $[k_i,k_j]$ with $F1(k)$ as the corresponding \textit{F-measure}. Note that for a peak interval the \textit{F-measure} is constant. Since $F1(k)$ belongs to a peak point/interval, there exists a $F1(k+\epsilon)$ $\epsilon$ skylines after $k$ such that $F1(k+\epsilon)<F1(k)$. Now, let us know suppose that we can find another optimum in $k+\delta$ such that  $F1(k+\delta)>F1(k)$. Since $F1(k+\epsilon)<F1(k)$, consequently $F1(k+\delta)>F1(k)>F1(k+\epsilon)$. $F1=2 \frac{{p} * {r}}{{p} + {r}}$ can be rewritten as $F1= \frac{2}{\frac{1}{p} + \frac{1}{r}}$.
So, we can rewrite: $\frac{2}{\frac{1}{p(k+\delta)} + \frac{1}{r(k+\delta)}}>  \frac{2}{\frac{1}{p(k)} + \frac{1}{r(k)}}>\frac{2}{\frac{1}{p(k+\epsilon)} + \frac{1}{r(k+\epsilon)}}$. Using Property 2, $p(k+\delta) \leq p(k+\epsilon)$, so:$\frac{2}{\frac{1}{p(k+\epsilon)} + \frac{1}{r(k+\epsilon)}} \geq \frac{2}{\frac{1}{p(k+\delta)} + \frac{1}{r(k+\epsilon)}}$, which means that: $\frac{2}{\frac{1}{p(k+\delta)} + \frac{1}{r(k+\delta)}}>\frac{2}{\frac{1}{p(k+\delta)} + \frac{1}{r(k+\epsilon)}}$.
According to Property 1, this inequality cannot hold, because $r(k+\delta)\geq r(k+\epsilon)$. Thus, our assumption of $F1(k+\delta)>F1(k)$ cannot hold and $F1(k)$ remains the highest value of \textit{F-measure}.
\end{proof} \begin{algorithm} [htp]

\caption{{SkyEx-F Early Stop (SkyEx-FES)}} 
\label{alg:skyex-fes}

\begin{algorithmic}[1]
\footnotesize

\INPUT A set of pairs  $P=\{\langle s_i, s_j \rangle \}$ \\ 
\OUTPUT A set of positive pairs $P^+$ and a set of negative pairs $P^-$  ;
\STATE $P_k \gets \emptyset$, $F_{\textit{previous}} \gets 0$
\setcounter{ALC@line}{1}

\WHILE{ $\lvert P_k \rvert < \lvert P \rvert$ }

\skipnumber{ Lines 3-8 as Algorithm \ref{alg:skyex-f}...}
\setcounter{ALC@line}{8}

\IF {$F1(k)<F_{\textit{previous}}$}
\STATE \textbf{break}
\ELSE
\STATE {$F_{\textit{previous}} \gets F1(k)$}
\ENDIF
\ENDWHILE

 \STATE $P^+ \gets \bigcup_{k=1}^{k_f} Skyline(k)$

\STATE $P^- \gets P - P^+$

\RETURN{ $P^+, P^-$}
\end{algorithmic}

\end{algorithm}

\end{subalgorithms}
Theorem 1 ensures that once we find the peak in the \textit{F-measure} function, we can stop finding all the skylines and label the pairs accordingly. Consequently, we can allow Algorithm \ref{alg:skyex-f} to stop early.
The modifications of Algorithm \ref{alg:skyex-f} are reflected in Algorithm \ref{alg:skyex-fes}. We use the same procedure as in Algorithm \ref{alg:skyex-f}, but we do not need to keep track of each of the skylines and their corresponding $\textit{F-measures}$.
Rather, we only keep the previous $\textit{F-measures}$ in $F_{\textit{previous}}$. While moving to the next skyline, we calculate the $\textit{F-measure}$ and the first time  we notice a drop (line~9), we stop the loop (line~10) and return both classes separated by the current $k$ (lines 7-8)). Otherwise, we update $F_{\textit{previous}}$ to the current $\textit{F-measure}$  (line~12) and continue the search for the optimal $k$.

\subsection{SkyEx-D}

The methods described in the previous sections assume that the labels of the pairs are present.
In this section, we assume no information about the labels, and thus, we propose a heuristic for fixing the value of $k$. The heuristic is based on the distance between the positive and the negative class. We refer to the $k$ discovered by \emph{SkyEx-D} as \emph{distance-based k} or $k_d$. Our classes are not characterized by a small intra-class distance. Various patterns can reveal a positive class; for example, a similar name but different category or similar category and similar address, etc. Thus, the positive pairs, positioned in the first skylines, are scattered and do not necessarily form a cluster. However, they can still be separated from the rest, considering the distance to the negative class. Theoretically, the inter-class distance stays small when we are in the first skylines (potential positives), then starts to increase while we move into later skylines and finally falls again when we enter the deeper skylines (potential negatives). \emph{SkyEx-D} notices the increase of the inter-class distance and sets $k_d$ accordingly. In order to have an approximation of the inter-class distance, we use the mean and denote it as $\mu_d$ as in Eq. 3. 
\begin{equation}
    \mu_d = \frac{\sum d(p^k,p^{-k})}{\lvert P_k \rvert}
\end{equation}
\noindent where  $\lvert P_k \rvert $ is the number of pairs from the $1^{st}$ to the $k^{th}$ skyline, $p^k$ is a pair in $P^k$, $p^{-k}$ is a pair in $P-P^k$, and $d(p^k,p^{-k})$ is the distance between $p^k$  and $p^{-k}$. 

In order to fix $k_d$, we monitor the value of  $\mu_d$ while moving deeper into the skylines. We denote by $\mu_d(k)$ the function of $\mu_d$ with regard to $k$. We use the first derivative of  $\mu_d(k)$, denoted as  $\mu'_d(k)$, to  find the points where the  $\mu_d(k)$ function decreases. The intuition behind this approach is that in the beginning, the distance $\mu_d(k)$ starts increasing, which means that the first derivative has a positive slope ($\mu'_d(k)>0$). Later, we enter the "grey area", where there is a mix of potential positives and potential negatives. This is where we need to stop because we might lose precision if we continue further. In order to find the "grey area", we note when the first derivative changes its slope to negative. In order to calculate $\mu'_d(k)$, we estimate the value of $\mu'_d(k)$ in each point $k$ as in Eq. 4:
\begin{equation}
    \mu'_d(k) =  \diffp{}{k} \approx \frac{\mu_d(k+1)-\mu_d(k)}{1}
\end{equation}
In order to not be sensitive to small fluctuations in $\mu'_d(k)$, we smoothen slightly  $\mu'_d(k)$ with Gaussian function ($\frac{1}{{\sigma \sqrt {2\pi } }}e^{{{ - \left( {x - \mu } \right)^2 } \mathord{\left/ {\vphantom {{ - \left( {x - \mu } \right)^2 } {2\sigma ^2 }}} \right. \kern-\nulldelimiterspace} {2\sigma ^2 }}}$) using a small window. Then we monitor when $\mu'_d(k)$ decreases for the first time and we set $k_d$ accordingly. We modify Algorithm \ref{alg:pareto} to accommodate this approach. We calculate  $\mu'_d(k)$ for each point of $k$ in line~7.  Then, we have to find the first negative value of  the smoothened $\mu'_d(k)$ (line~9) and fix $k_d$ accordingly (line~10). Finally, we return the classes defined by $k_d$ in lines 16-17.
\begin{algorithm} [htp]

\caption{SkyEx-D} 
\label{alg:skyex-d}

\begin{algorithmic}[1]
\footnotesize

\INPUT A set of pairs  $P=\{\langle s_i, s_j \rangle \}$ \\ 
\OUTPUT A set of positive pairs $P^+$ and a set of negative pairs $P^-$  ;

\STATE $P_k \gets \emptyset$

\skipnumber{ Lines 2-6 as Algorithm \ref{alg:pareto}...}

\setcounter{ALC@line}{6}

\STATE Calculate  $\mu'_d(k)$ in each $k$

\WHILE {$k<k_{\textit{last}}$  }
\IF { $\textit{smooth}(\mu'_d(k))<0$}
\STATE $k_d \gets k$
\STATE \textbf{break}
\ELSE 
\STATE $k \gets k+1$
\ENDIF
\ENDWHILE

\STATE $P^+ \gets \bigcup_{k=1}^{k_d} Skyline(k)$
\STATE $P^- \gets P_k - P^+$

\RETURN{ $P^+, P^-$}
\end{algorithmic}

\end{algorithm}

\emph{\textbf{Summary. }}
Algorithm \ref{alg:skyex-d} estimates the skyline level $k$ that best separates the positive class from the negative class. \emph{Similarly to clustering techniques that use heuristics to estimate their parameters, SkyEx-D uses the distance of the positive class from the rest as an indicator of class separability}. However, in contrast to clustering metrics, which focus on the robustness of clusters, this is not a requirement for the \emph{SkyEx-*} family of algorithms. \emph{The positive pairs do not show similar patterns, but rather similar utilities, which can be better captured by skylines} (see Sect. 9.9). Experimentally, we show that our inter-class distance approach estimates $k_d$ very close to $k_f$ without loosing in \textit{F-measure}. In contrast to techniques that use a scoring function, the \emph{SkyEx-*} family of algorithms abstracts the concept of utility. Thus, no weights or similarity function is needed. Even though the positive class can be characterized by various patterns of attribute similarities,  the \emph{SkyEx-*} family of algorithms can still group together the positive class based on the high utility, while a clustering technique would instead focus in grouping each pattern separately, without putting the positive-class pairs together into one cluster. Moreover, the flexibility of the \emph{SkyEx-*} family of algorithms makes it applicable to all problems where the expert knowledge on the contribution of the attributes is missing. Finally, the \emph{SkyEx-*} family of algorithms does not learn any behavior, so there is no risk of overfitting.

\section{Complexity Analysis of QuadSky}

In this section, we discuss the time complexity of our algorithms and of our \emph{QuadSky} solution.

\textbf{\emph{QuadFlex}} deals with points (not regions); thus, it behaves similarly to a point quadtree. \emph{QuadFlex} splits the same way as a quadtree, but in contrast to the quadtree, the points can be assigned to more than one child.
We construct the \emph{QuadFlex} structure only for forming the blocks. Hence, the construction complexity is of interest to us. Let us denote by $\lvert S \rvert$  the number of points in $S$, $c$ the smallest distance between any two points, and $D_1$ and $D_2$ the dimensions of the initial area containing all the points. 
Let us first estimate the depth of \emph{QuadFlex}. The distance $c$ of any two points $p_1$ and $p_2$ in \emph{QuadFlex} is always less than the diagonal of the node they belong in. Given that \emph{QuadFlex} allows neighboring points to be included in more than one child, this calculation needs to be modified. The physical diagonal of the initial (level 0) node is $\sqrt{D_1^2 + D_2^2}$. The diagonal of level $i$ is $\frac{\sqrt{D_1^2 + D_2^2}}{4^i}$. To modify the calculation, we estimate the logical diagonals if \emph{QuadFlex} would physically expand to accommodate neighboring points, so: $c \leq  \frac{\sqrt{\frac{3D_1^2}{2} + \frac{3D_2^2}{2}}}{4^i} $
Now, isolating $i$ out of this equation results in 
$i \leq  \log_4 \frac{\sqrt{\frac{3}{2} (D_1^2+D_2^2)}}{c} = \log_4 \sqrt{\frac{3}{2}}+\log_4\frac{\sqrt{D_1^2+D_2^2}}{c}$.
   $\log_4 \sqrt{\frac{3}{2}} \approx 0.14$  so we can discard it (less than one level):
$i \leq  \log_4\frac{\sqrt{D_1^2+D_2^2}}{c}$. For estimating the maximal depth, we need to add one more level (root)
    so the depth is estimated as $\log_4\frac{\sqrt{D_1^2+D_2^2}}{c} + 1$.
    Finally, for constructing \emph{QuadFlex}, the complexity is $O(\lvert S \rvert (\log_4\frac{\sqrt{D_1^2+D_2^2}}{c} + 1))$.

    \textbf{\emph{SkyRank}}  requires calculating the Pareto frontiers, which is time-consuming. In the typical case, comparing the pairs $P$ resulting from \emph{QuadFlex} in terms of all $d$ dimensions has a  $O({2^{\lvert P \rvert}} ^d)$ time complexity, 
    which is not scalable. \emph{SkyRank} uses the method proposed in \cite{endres2015scalagon}, which first scales down the d-dimensional domain and then pre-filters the data using a lattice. This yields a time complexity of $O({\lvert P \rvert} ^2)$ for the first skyline.
    For the total number of $K$ skylines, the complexity is $O(K{\lvert P \rvert} ^2)$.
    
    \textbf{\emph{SkyEx-F}} calculates the metrics while adding the next skyline to the positive class; thus, these calculations do not add any complexity. Finally, we perform a linear search on $F$ to find the skyline with the highest F-measure. The size of $F$ is equal to $K$, so the complexity is $O(K{\lvert P \rvert} ^2 + K)$.

     \textbf{\emph{SkyEx-FES}} stops earlier than \emph{SkyEx-F}, avoiding a big part of the time-consuming Pareto calculations. Given that the best pairs usually are focused on the first skylines, the cut-off $k \ll K$. Moreover, according to Theorem 1, we do not need to store $F$, so we avoid the linear search for the best F-measure. The complexity is $O(k{\lvert P \rvert} ^2)$.

     \textbf{\emph{SkyEx-D}} uses all $K$ Pareto calculations and then, in order to estimate the cut-off $k_d$, it computes the distance between the positive class and the rest. \emph{SkyEx-D} creates a matrix where the rows are the positive class $P^+$ and the columns are the negative class data points $\lvert P \rvert - P^+$, so the complexity is $P^+ * (\lvert P \rvert - P^+)$.
     $P^+ * (\lvert P \rvert - P^+) = P^+ * \lvert P \rvert - {(P^+)}^2 $ is the equation of a vertical parabola that opens downwards $-ax^2 + bx + c$, with the maximal value at the vertex $(-\frac{b}{2a})$. In our case, the maximum of $P^+ * (\lvert P \rvert - P^+)$ is at 
     $\frac{\lvert P \rvert} {2}$, resulting in a maximal complexity of $\frac{\lvert P \rvert ^2} {4}$. For each skyline $k$ in $K$, the maximal complexity is   $\frac{\lvert P \rvert ^2} {4}$, thus, $K \frac{\lvert P \rvert ^2} {4}$ for all.  
     Note here that $K\ll\lvert P\rvert $ so it is far from a cubic complexity. SkyEx-D computes the mean distance $\mu_d$ for each $k$, which can already be done within the  $\frac{\lvert P \rvert ^2} {4}$ complexity. Then, we compute the derivative $\mu'_d$  on the means, which has a linear complexity in  $K$. Finally, we need another partial scan until $k_d$ ($k_d \ll K$) where the derivative  $\mu'_d$ becomes negative for the first time. Hence, the total complexity is $ O (K{\lvert P \rvert}^2 + K \frac{\lvert P \rvert ^2} {4} + K + k_d) = O(\frac{5K}{4}\lvert P \rvert + K + k_d$).
    
    \textbf{\emph{Summary}}. \emph{QuadFlex} has $O(n \log n)$ complexity, the pairwise comparisons have a linear $O(n)$ complexity, while the \emph{SkyEx-*} family of algorithms have a quadratic complexity $O(n^2)$. However, there is a theoretical risk of a cubic complexity in \emph{SkyEx-F} and \emph{SkyEx-D} if the number of skylines $K = \lvert P \rvert$. This means that each skyline in $K$ contains only one pair of entities, which theoretically can happen but almost never happens in practice. Thus, the algorithms have quadratic complexity in the average case. \emph{SkyEx-D} has the highest complexity, followed by \emph{SkyEx-F} and \emph{SkyEx-FES}. Overall, \emph{QuadSky} has a quadratic complexity.

\section{Experiments}

\subsection{Dataset Description}

The spatial entities that will be used in these experiments originate from four sources, namely Google Places (GP), Foursquare (FSQ), Yelp, and Krak. Krak (www.krak.dk) is a location-based source that offers information about companies, enterprises, etc. in Denmark and is also part of Eniro Danmark A / S., which publishes The Yellow Pages. The data is obtained by using the available APIs and the algorithm detailed in \cite{isaj2019seed}. 
The dataset consists of 75,541 spatial entities where 51.50\%  comes from GP, 46.22\% from Krak, 0.03\% from FSQ, and 2.23\% from Yelp (see Annex A for the spread of these spatial entities on the map). The dataset is 69 MB. For a 100 m blocking, there are 35,521 spatial entities that have at least one positive match in the dataset, resulting in 27,102 pairs that need to be discovered. 7,795 of these pairs are within the same source, which shows that none of these sources are free of duplicates. 3,546 of the same-source links come from GP, 3,789 from Krak, and 460 from Yelp. As for the different-source links, all the sources overlap with each other, but the highest overlap of 17,405 pairs (90\% of different-source links) comes from Krak and GP.

\subsection{QuadFlex Performance}
\begin{figure}[t]
  \begin{subfigure}[b]{0.46\linewidth}
    \includegraphics[width=\linewidth, trim={0 6mm 0 2mm},clip]{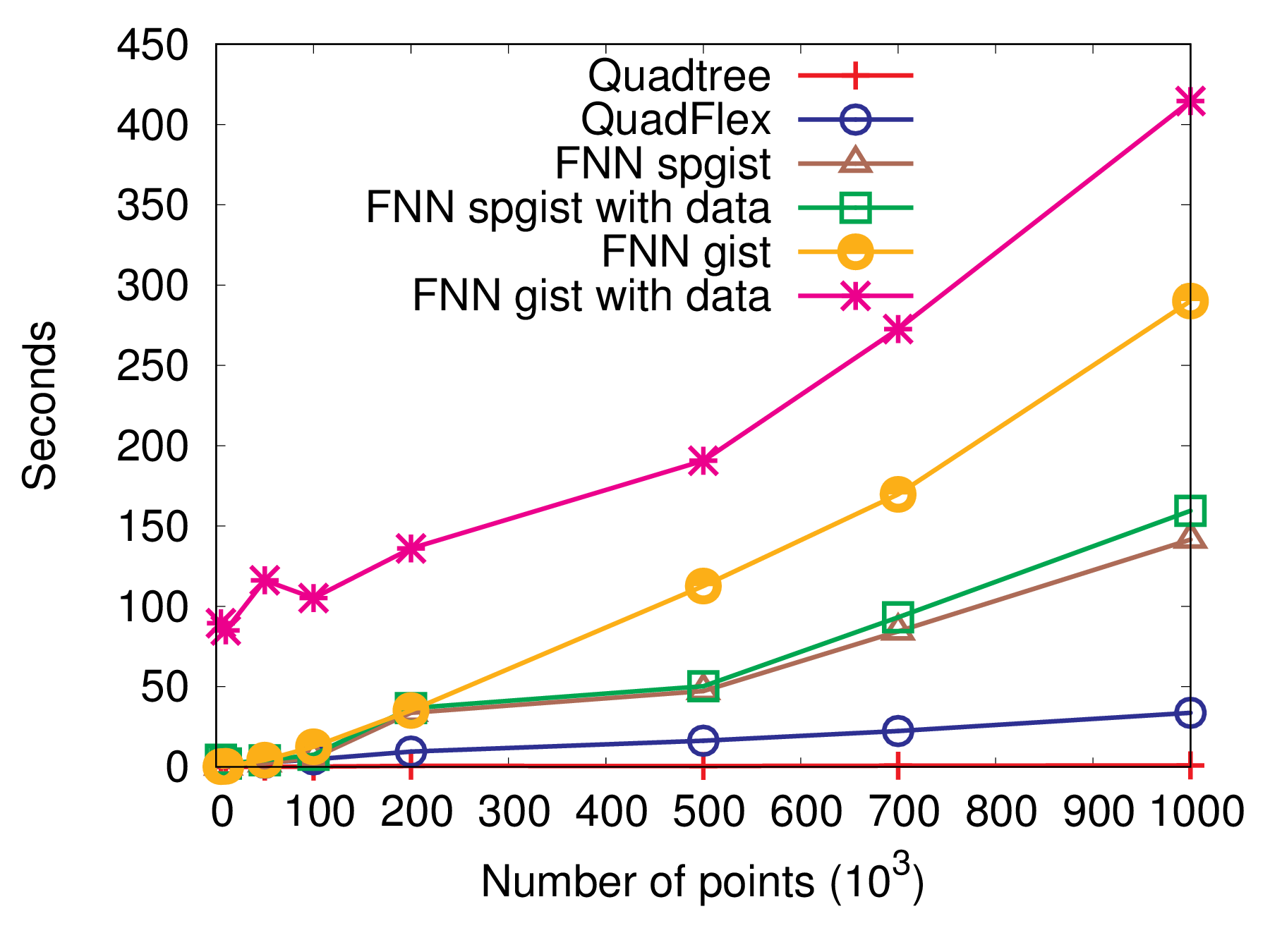}
   \caption{Execution time}
    \label{fig:timeQuad}   
  \end{subfigure}\begin{subfigure} [b]{0.46\linewidth}
     \includegraphics[width=\linewidth, trim={0 6mm 0 2mm},clip]{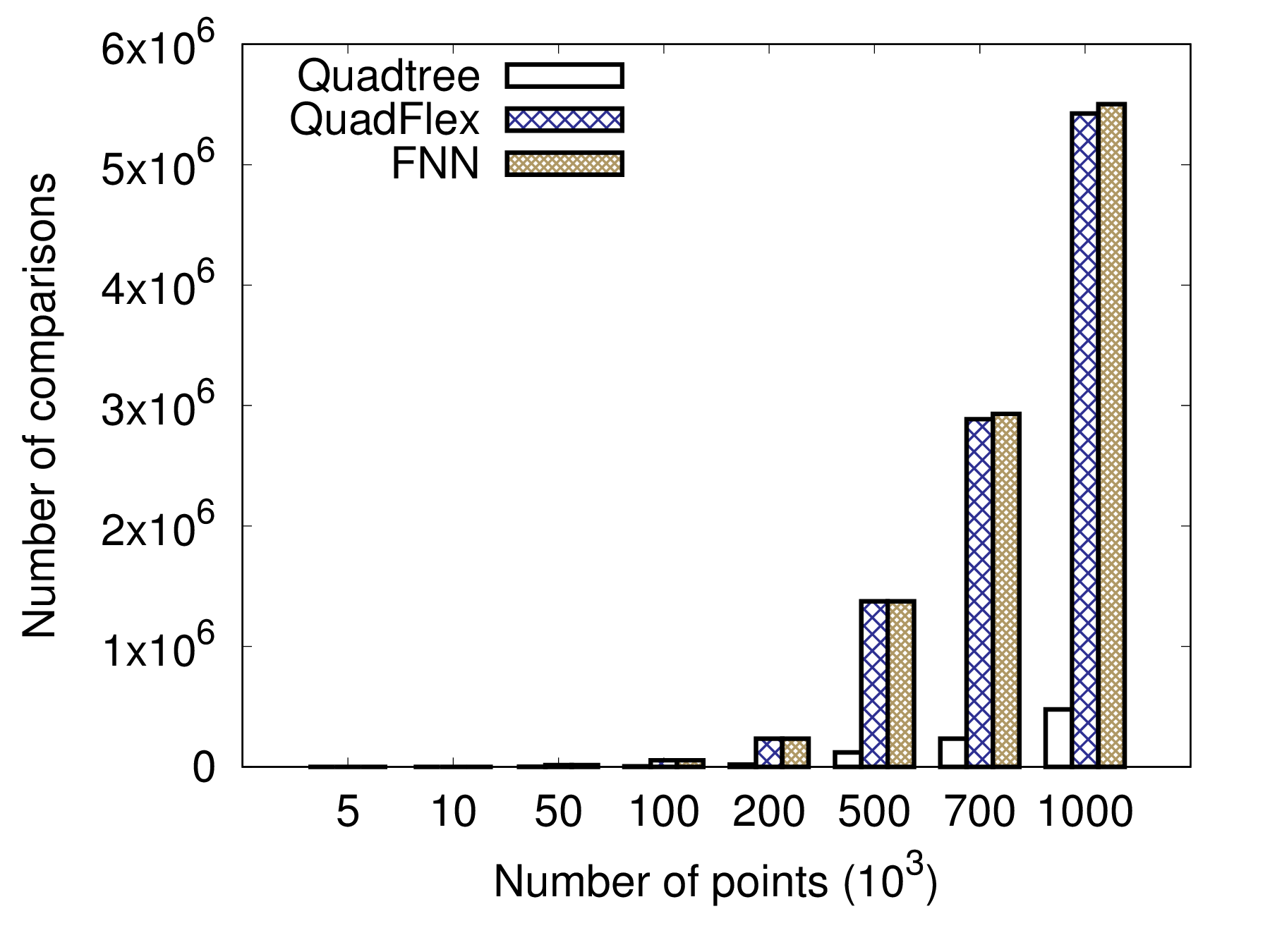}
   \caption{Number of comparisons}
    \label{fig:completeQuad}
 \end{subfigure} 
 \caption{Comparing quadtree, QuadFlex and FNN}
 \label{fig:compQuad}
\end{figure} 

\begin{figure*}[htp]
 \begin{minipage}{.5\textwidth}
  \centering
  \begin{subfigure}[b]{0.48\linewidth}
    \includegraphics[width=\linewidth, trim={0 6mm 0 4mm},clip]{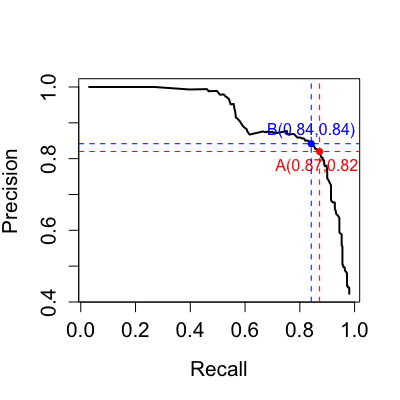}
   \caption{precision and recall}
    \label{fig:samplePR}   
  \end{subfigure}\begin{subfigure} [b]{0.48\linewidth}
     \includegraphics[width=\linewidth, trim={0 6mm 0 4mm},clip]{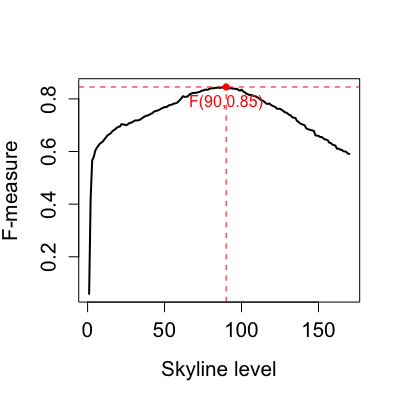}
   \caption{F-measure}
    \label{fig:sampleF}
 \end{subfigure} 
 
 \caption{SkyEx-F performance on $D_{\textit{sample}}$
 }
 \label{fig:sampleEval}
 \end{minipage}\begin{minipage}{.5\textwidth}
  \centering
 \begin{subfigure}[b]{0.48\linewidth}
    \includegraphics[width=\linewidth, trim={0 6mm 0 4mm},clip]{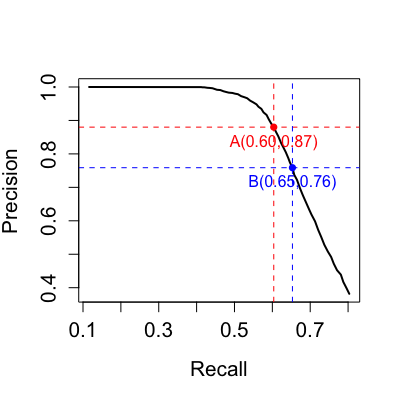}
   \caption{precision and recall}
    \label{fig:fullPR}   
  \end{subfigure}\begin{subfigure} [b]{0.48\linewidth}
     \includegraphics[width=\linewidth, trim={0 6mm 0 4mm},clip]{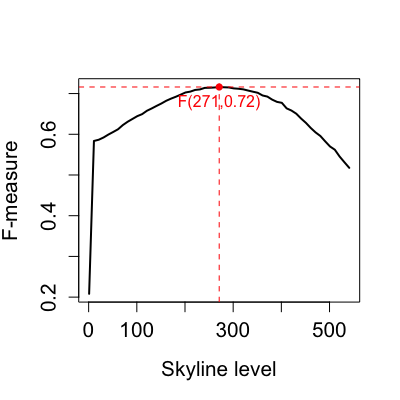}
   \caption{F-measure}
    \label{fig:fullF}
 \end{subfigure} 
 \caption{SkyEx-F performance on $D_{\textit{full}}$
 }
 \label{fig:fullEval}
  \end{minipage}
\end{figure*}

In this section, we compare the performance of \emph{QuadFlex} to the quadtree, Fixed Radius Nearest Neighbors algorithm \cite{bentley1977complexity} (FNN), and having no index at all (No-Index). FNN finds the neighbors that fall within a fixed radius from each point. \emph{QuadFlex} and the quadtree algorithm are implemented in Java, while FNN is run on a Postgres database (https://www.postgresql.org) using spatial indexes: GiST  
(optimized C implementation of B-trees and R-trees) and SP-GiST 
(optimized C implementation of quadtrees and k-d trees). Our dataset contains 75,541 entities in the North Denmark region (around 16 towns, 7,933 $km^2$), so the average density is not high, even though there are areas with high density. A high data density means more pairs to compare. To test our \emph{QuadFlex} on different data densities, we simulate up to 1,000,000 random points from Aalborg (139 $km^2$). Fig.~\ref{fig:compQuad} shows the comparison of quadtree, \emph{QuadFlex} and FNN in terms of execution time (Fig.~\ref{fig:timeQuad}) and number of comparisons (Fig.~\ref{fig:completeQuad}). The FNN versions with data are computed on the database, and then the pairs are loaded back to the java implementation.  The quadtree has the lowest execution time, followed by \emph{QuadFlex}. FNN SP-GiST is comparable and sometimes even better than \emph{QuadFlex} for small datasets. However, when the size of the dataset increases, \emph{QuadFlex} maintains an execution time that is eight times less than FNN GiST and 3 times less than FNN SP-GiST. FNN with SP-GiST index outperforms FNN GiST for all dataset sizes. No-Index was very inefficient,
up to 848 times slower than FNN Gist with data, and up to 368,095 times slower than \emph{QuadFlex}. Given that No-Index would have dwarfed the other curves, it is not part of Fig.~\ref{fig:timeQuad}, but instead, refer to Fig. 2 in  Annex B.  As for the number of comparisons, \emph{QuadFlex} enumerates 12 times more comparisons than quadtree. Moreover, \emph{QuadFlex} contains almost all (99.99\%) comparisons of FNN, compared to the quadtree that contains only 10\% of FNN. Furthermore, given that the scalability of \emph{QuadFlex} is better than FNN, and \emph{QuadFlex} is independent of the database implementations, the loss of around 0.01\% of comparisons is insignificant.


\subsection{SkyEx-F Results}

 \begin{figure}[t]
 \centering
  \begin{subfigure}[b]{0.45\linewidth}
    \includegraphics[width=\linewidth]{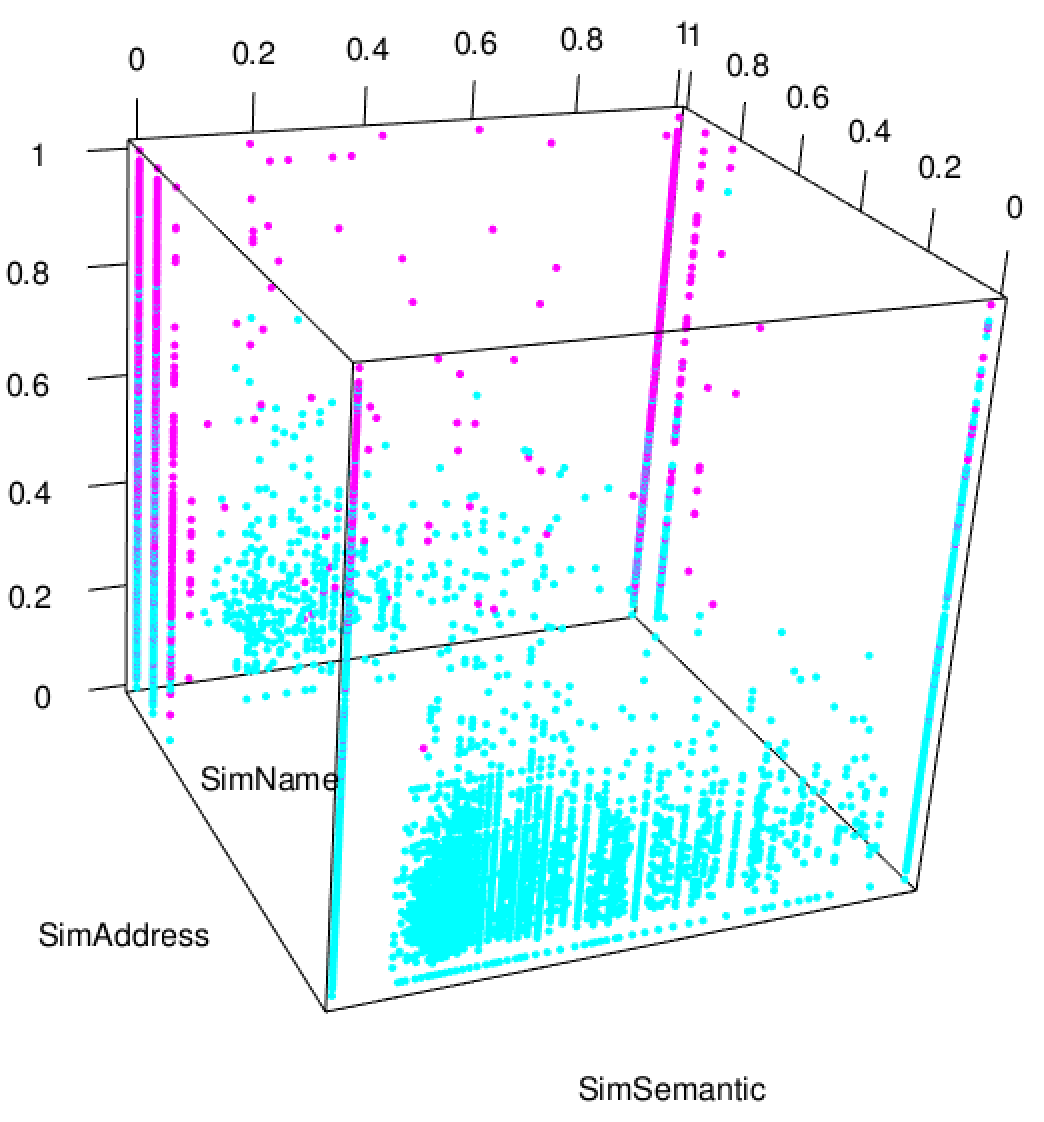}
   \caption{Actual classes}
    \label{fig:classes}   
  \end{subfigure}\begin{subfigure} [b]{0.45\linewidth}
     \includegraphics[width=\linewidth]{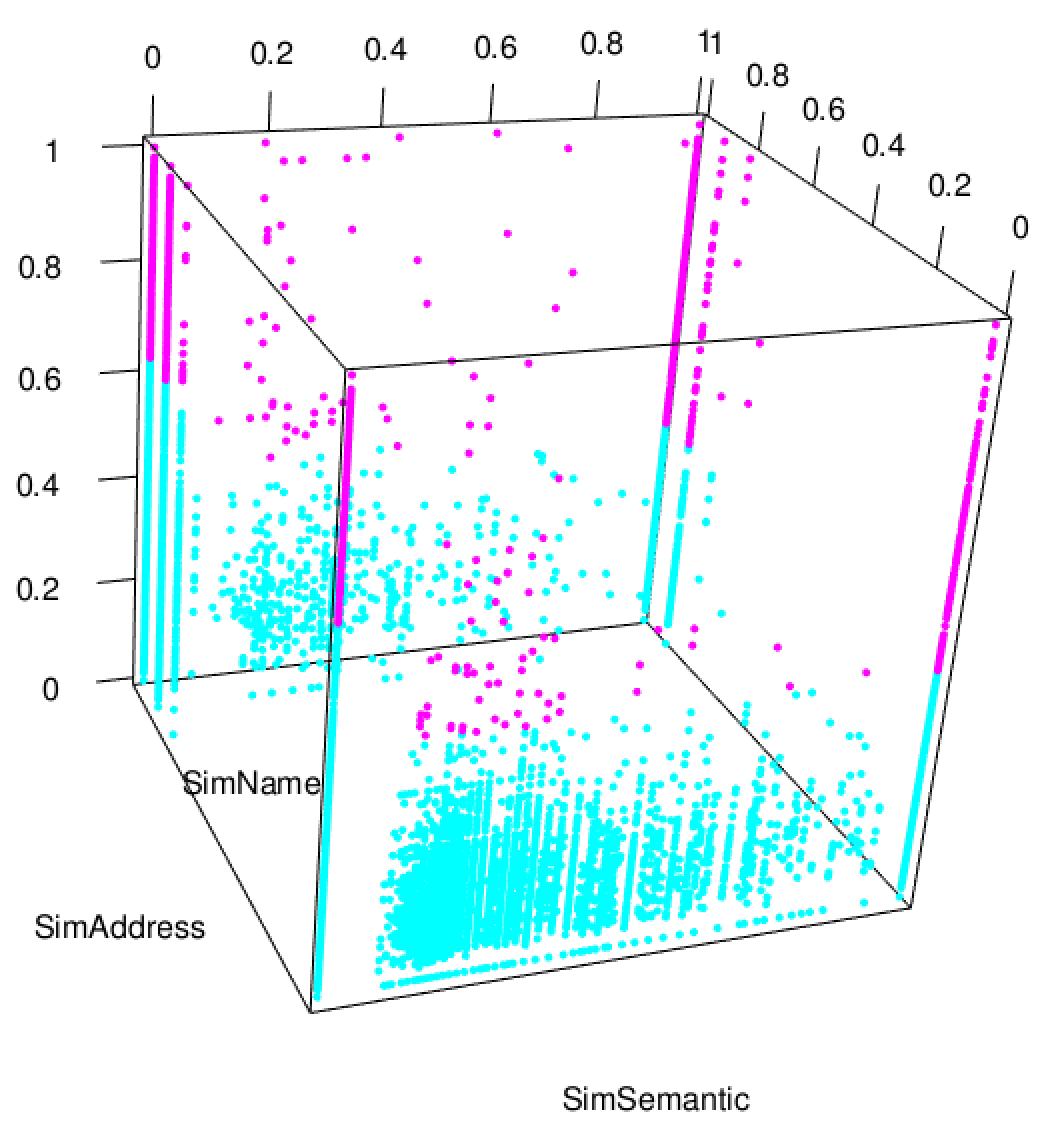}
   \caption{SkyEx classes}
    \label{fig:skyexClasses}
 \end{subfigure} 
 \caption{Positive (in pink) versus negative (in sky blue) classes for actual (a) and SkyEx-F (b) results}
 \label{fig:3D}
 
\end{figure}

We ran \emph{QuadFlex} with 100 m and no density restriction, and we obtained 777,452 pairs (1426 MB). Having the same website or phone is a strong indicator of a match, so we use these attributes to infer the label. We refer to this labeling as \emph{automatic labeling}. However, cases with different phone number or website but still the same entity, or same phone number but different entity might occur. 
Hence, we manually checked the labels of a sample of 1,500 pairs of entities (1552 kB). We will refer to the sample of manually checked pairs as $D_{\textit{sample}}$ and to the full dataset as $D_{\textit{full}}$. Checking the labels manually on the full dataset of 777,452 pairs is unfeasible. Hence, we checked around 10,000 of the pairs, and for the rest, we rely on automatic labeling.

The results of \emph{SkyEx-F} on $D_{\textit{sample}}$ and $D_{\textit{full}}$ are presented in Figs.~\ref{fig:sampleEval} and \ref{fig:fullEval}. The curves in Figs.~\ref{fig:samplePR} and \ref{fig:fullPR} shows the evolution of $\textit{p}$ ($y$-axis) and $\textit{r}$ ($x$-axis) while we move from one skyline to the next. \emph{The more we explore, the more likely it is to retrieve more true positives and thus, improve the $\textit{r}$}. However, \emph{the more we explore and label pairs as positives, the more likely it is to increase the number of false positives, so the $\textit{p}$ degrades}. The algorithm explores several trade-offs; for example the points $A$ and $B$ are among the best. The point $A$ with 0.87 $\textit{p}$ and 0.82 $\textit{r}$ in Fig.~\ref{fig:samplePR} is the same best point in terms of \textit{F-measure}  as well, so that is where \emph{SkyEx-F} will fix $k_f$.  Fig.~\ref{fig:sampleF} shows the levels of the skyline, and the value of \textit{F-measure}  achieved. The highest value is 0.85 that corresponds to $k=90$. 
The evaluation on the full dataset yields lower values (\textit{F-measure} of 0.72) compared to the sample (\textit{F-measure} of 0.85), which might be a simple consequence of automatic labeling. 
Point $A$ has 0.6 $\textit{r}$ and 0.87 $\textit{p}$, while $B$ offers a higher $\textit{r}$ of 0.65 but a lower $\textit{p}$ of 0.76 (Fig.~\ref{fig:fullF}). To have an idea of the real classes in  $D_{\textit{full}}$ and the skylines, we plotted their distribution in Fig.~\ref{fig:3D} (the actual positive classes in pink and the negative ones in sky blue). It is noticeable that the positive class pairs are allocated in the highest values of the dimensions. 
Despite the differences between both plots, \emph{SkyEx-F} shows promising results in separating the positive class from the negative one with 0.6 $\textit{r}$ and 0.87 $\textit{p}$.

\subsection{Experimenting with Different QuadFlex Parameters}

So far, we used \emph{QuadFlex} blocking technique with 100 meters and no density restriction. In this section, we will evaluate our approach \emph{QuadSky} for different blocking parameters. 

\textbf{Changing m, no density limit } In this experiment, we test different values of $m$ used in \emph{QuadFlex} for creating spatial blocks. We test $m$ values of 1, 20, 40, 60, 80, and 100 meters. The size of the dataset for each of them is presented in Table~\ref{tab:mNod}. The spatially closeby points are likely to be a match. Hence, the percentage of the true positives is generally higher for smaller values of $m$. An interesting case is $m=1$, where the percentage of the true positives (\textit{TP}) is lower than $m=20$. One would expect that points that are $1$ meter apart would unquestionably be a match. However, this is not always the case. Shopping malls, buildings that host several companies, etc. are characterized by the same coordinates but not necessarily the same spatial entities. The results for different values of $m$ are presented in Table \ref{tab:mNod} (see the precision-recall graphs for all cut-offs in  Annex C). 
For all cases, the $\textit{r}$ is higher than 0.6. The $\textit{p}$ is higher than 0.8 for all values of $m$, except $m=1$, where the $\textit{p}$ is 0.67.
For $m=1$, the positive and negative class are mixed, thus \emph{SkyEx} loses a bit in $\textit{p}$. This is also an argument against the works that merge arbitrarily points that are 5 m apart. \emph{Spatial proximity is not a definitive indicator of a match}.

\textbf{Changing d, m $\boldsymbol{\leq}$ 100. } We experiment with different values of density $d$ and its effect on the results. 
The size of the dataset, the percentage of the true positives, and the results in terms of precision, recall, and \emph{F-measure} are in
Table~\ref{tab:dNod} (see the precision-recall graphs for all cut-offs in Fig. 9 in \cite{Isaj2019multi}). When the density is smaller, we force \emph{QuadFlex} to split further and create smaller blocks. Thus, the number of pairs reduces. Note that, on the contrary, the percentage of the true positives (\textit{TP}) increases. Indeed, further splits allow us to create better blocks containing a higher percentage of \textit{TP}. However, when the density limit increases above $\frac{30s}{1000m^2}$, fewer and fewer blocks are split further, so the dataset size and the percentage of the \textit{TP} do not vary significantly.
In all the cases, the $\textit{r}$ stays above 0.61 and the $\textit{p}$ above 0.87. A slightly better $\textit{p}$ (0.88) and $\textit{r}$ (0.63) is achieved in the case of a density of $\frac{10s}{1000m^2}$ (the lowest parameter). \emph{SkyEx-F adapts very well in finding the correct classes even when the size of blocks changes and even when the percentage of the true positives over the true negatives varies}.

\begin{table}[t]
\centering
\scriptsize
\caption{\emph{SkyEx-F} results for different m }
\begin{tabular}{@{}lllllll@{}}
\toprule
\textbf{Meters}          & \textbf{$1$} & \textbf{$20$} & \textbf{$40$} & \textbf{$60$} & \textbf{$80$} & \textbf{$100$} \\ \midrule
\textbf{Total}    & 41053      & 118437      & 226331      & 372553      & 557421      & 777452       \\
\textbf{\% of TP} & 17.11\%    & 19.88\%     & 11.28\%     & 7.06\%      & 4.82\%      & 3.49\%       \\
\textbf{Prec.} & 0.67 & 0.80 &0.85 & 0.85 & 0.88 & 0.87 \\
\textbf{Rec.} & 0.60 & 0.69 & 0.65 & 0.64 & 0.61 & 0.61 \\
\textbf{F1} & 0.64 & 0.75 & 0.74 & 0.73 & 0.72 & 0.72 \\
 \bottomrule
\end{tabular}
\label{tab:mNod}

\end{table}

\begin{table}[t]
\centering
\scriptsize
\caption{\emph{SkyEx-F} results for different d }
\begin{tabular}{@{}lllllll@{}}
\toprule
\textbf{Den.}  & \textbf{$\frac{10 s}{1000 m^2}$} & \textbf{$\frac{20 s}{1000 m^2}$}&\textbf{$\frac{30 s}{1000 m^2}$} & \textbf{$\frac{40 s}{1000 m^2}$} & \textbf{$\frac{50 s}{1000 m^2}$}& \textbf{$\frac{60 s}{1000 m^2}$} \\ \midrule
\textbf{Total}       & 290653      & 590583      & 711423      & 754195      & 770987      & 776664      \\
\textbf{\% of TP}    & 8.61\%      & 4.57\%      & 3.81\%      & 3.59\%      & 3.51\%      & 3.49\%      \\
\textbf{Prec.} & 0.88 & 0.88 & 0.87 & 0.87 & 0.87 & 0.87 \\
\textbf{Rec.} & 0.63 & 0.61 & 0.61 & 0.61 & 0.61 & 0.61 \\
\textbf{F1} & 0.74 & 0.74 & 0.72 & 0.72 & 0.72 & 0.72 \\
\bottomrule
\end{tabular}

\label{tab:dNod}

\end{table}

\subsection{SkyEx-FES Optimization}

\begin{figure*}[htbp]
  \begin{subfigure}[b]{0.25\linewidth}
    \includegraphics[width=\linewidth,trim={0 4mm 0 10mm},clip]{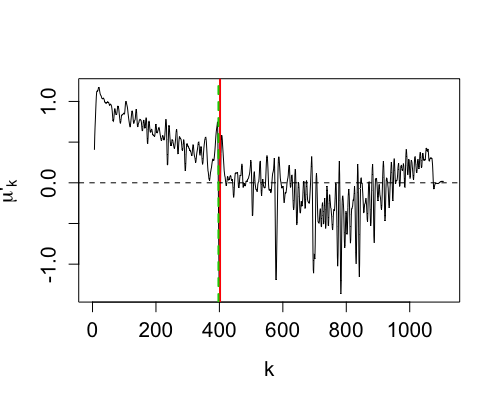}
   \caption{m=30}
    \label{fig:deriv30m}   
  \end{subfigure}\begin{subfigure} [b]{0.25\linewidth}
     \includegraphics[width=\linewidth,trim={0 4mm 0 10mm},clip]{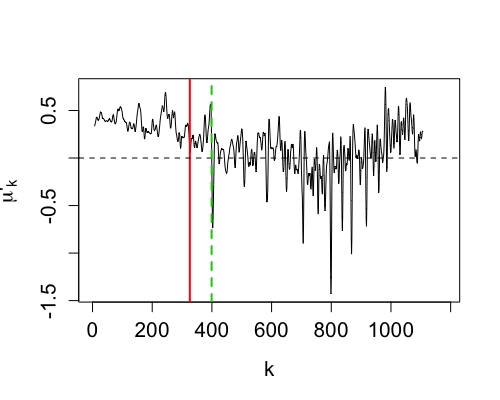}
   \caption{m=50}
    \label{fig:deriv50m}
 \end{subfigure}\begin{subfigure}[b]{0.25\linewidth}
    \includegraphics[width=\linewidth,trim={0 4mm 0 10mm},clip]{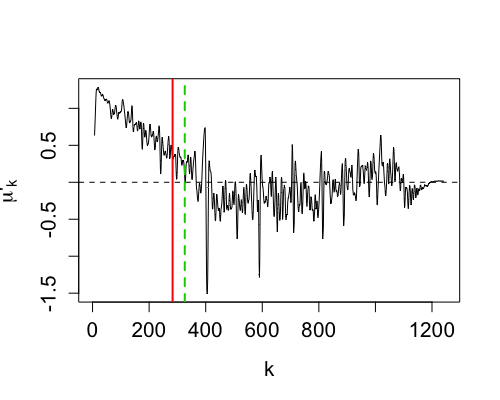}
   \caption{m=80}
    \label{fig:deriv80m}   
  \end{subfigure}
  \begin{subfigure} [b]{0.25\linewidth}
     \includegraphics[width=\linewidth,trim={0 4mm 0 10mm},clip]{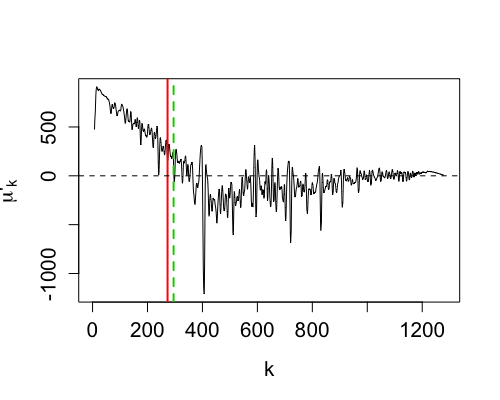}
   \caption{m=100}
    \label{fig:deriv100m}
 \end{subfigure} 
 \caption{Setting $k_d$ using $\mu'_d$}
 \label{fig:deriv}
 \vspace{-4mm}
\end{figure*}

\begin{figure*}[htbp]
  \begin{subfigure}[b]{0.25\linewidth}
    \includegraphics[width=\linewidth,trim={0 4mm 0 10mm},clip]{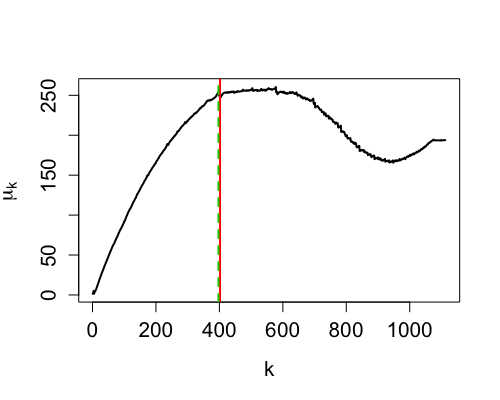}
   \caption{m=30}
    \label{fig:estimk30m}   
  \end{subfigure}\begin{subfigure} [b]{0.25\linewidth}
     \includegraphics[width=\linewidth,trim={0 4mm 0 10mm},clip]{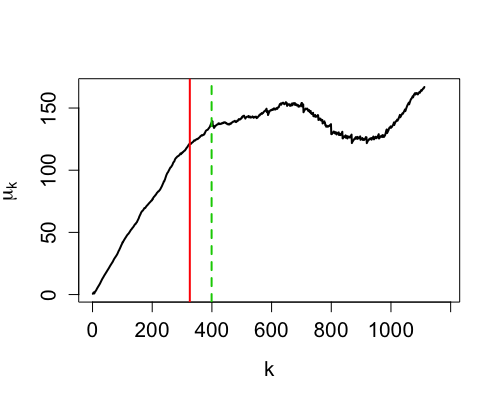}
   \caption{m=50}
    \label{fig:estimk50m}
 \end{subfigure}\begin{subfigure}[b]{0.25\linewidth}
    \includegraphics[width=\linewidth,trim={0 4mm 0 10mm},clip]{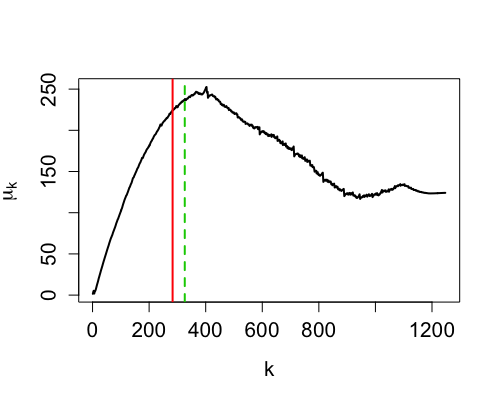}
   \caption{m=80}
    \label{fig:estimk80m}   
  \end{subfigure}
  \begin{subfigure} [b]{0.25\linewidth}
     \includegraphics[width=\linewidth,trim={0 4mm 0 10mm},clip]{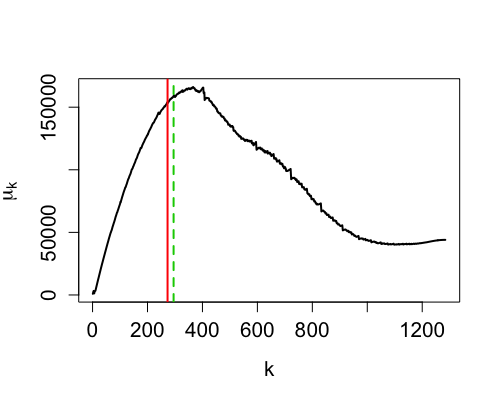}
   \caption{m=100}
    \label{fig:estimk100m}
 \end{subfigure} 
 \caption{$\mu_d(k)$ function with respect to $k$}
 \label{fig:estimk}
 \vspace{-4mm}
\end{figure*}

\begin{figure*}[htbp]
  \begin{subfigure}[b]{0.25\linewidth}
    \includegraphics[width=\linewidth,trim={0 4mm 0 10mm},clip]{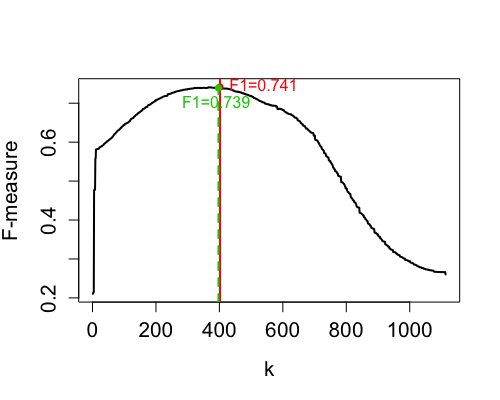}
   \caption{m=30}
    \label{fig:f30}   
  \end{subfigure}\begin{subfigure} [b]{0.25\linewidth}
     \includegraphics[width=\linewidth,trim={0 4mm 0 10mm},clip]{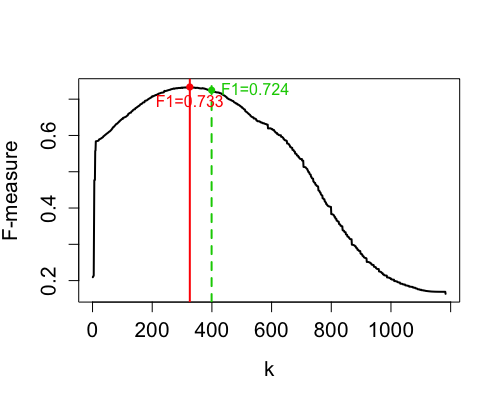}
   \caption{m=50}
    \label{fig:f50}
 \end{subfigure}\begin{subfigure}[b]{0.25\linewidth}
    \includegraphics[width=\linewidth,trim={0 4mm 0 10mm},clip]{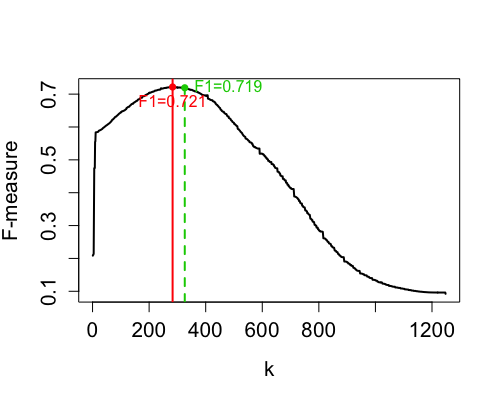}
   \caption{m=80}
    \label{fig:f80}   
  \end{subfigure}
  \begin{subfigure} [b]{0.25\linewidth}
     \includegraphics[width=\linewidth,trim={0 4mm 0 10mm},clip]{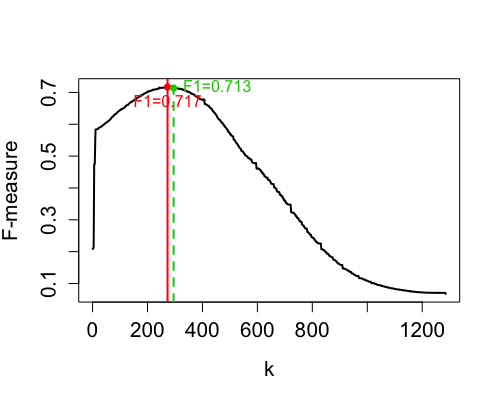}
   \caption{m=100}
    \label{fig:f100}
 \end{subfigure} 
 \caption{\textit{F-measure} values for different $k$}
 \label{fig:f}
 
\end{figure*}

Given the theoretical guarantee in Theorem 1, we can stop \emph{SkyEx-F} earlier  as described in Algorithm \ref{alg:skyex-fes}. We ran \emph{SkyEx-FES} for spatial entities that are 30, 50, 80, and 100 m apart. For all the cases, \emph{SkyEx-FES found the same $k_f$ values as SkyEx-F exploring  only 27\% of the skylines on average.}
The comparison regarding the number of iterations is shown in Table \ref{tab:skyex-fes}. For spatial entities that are 30 m, 50 m, 80 m, and 100 m apart, \emph{SkyEx-FES} finds the optimal $k_f$ exploring 36\%, 27\%, 23\%, and 22\%  of the skylines, respectively. Moreover, our theoretical guarantee that the \emph{F-measure} function has only one optimum can also be noticed in Figs.~\ref{fig:sampleF} and \ref{fig:fullF}.

\begin{table}[t]

\centering
\scriptsize
\caption{Skyline explorations of \emph{SkyEx-FES} compared to \emph{SkyEx-F} for pairs that are 30, 50, 80, and 100 m apart}
\begin{tabular}{@{}lllll@{}}
\toprule
\textbf{Distance} & \textbf{30 m} & \textbf{50 m} & \textbf{80 m} & \textbf{100 m} \\ \midrule

\textbf{Number of pairs} & 168193 &  293833 &  557421 & 777452 \\
\textbf{\% of TP} & 14.76\%  & 8.8\% & 4.82\% & 3.49\% \\
\textbf{\emph{SkyEx-F} skylines}          & 1113          & 1182          & 1228          & 1228           \\
\textbf{\emph{SkyEx-FES  }  skylines}      & 403           & 327           & 284           & 274            \\ \bottomrule
\end{tabular}
\label{tab:skyex-fes}
\end{table}

\subsection{SkyEx-D Performance}

In these experiments, we use \emph{SkyEx-D}  (Algorithm \ref{alg:skyex-d}) to set $k_d$ and evaluate our results in terms of \textit{F-measure}. We apply \emph{SkyEx-D} on spatial entities that are 30, 50, 80, and 100 meters apart (see the dataset details in Table \ref{tab:skyex-fes}). We calculate the first derivative ($\mu'_d$) in each point as in Algorithm \ref{alg:skyex-d}. The smoothed $\mu'_d(k)$ with respect to $k$ are presented in Fig. \ref{fig:deriv}. The red solid line shows the value of $k_f$, while the green dashed line represents $k_d$ found by \emph{SkyEx-D}. We note when $\mu'_d(k)$ is negative for the first time and set $k_d$ accordingly.  In the case of spatial entities that are 30 m apart (Fig. \ref{fig:deriv30m}), $k_d$ is only 5 skylines apart from $k_f$ but 73 skylines for 50 m.
These values of $k_d$ are discovered using the first derivative (Eq. 4, Sect. 7.2). We illustrate the trend of $\mu_k$ while increasing $k$, which means that we explore deeper skylines and examine more pairs that are less likely to be a match. The distance from the positive class to the negative is smaller in the beginning because the mean $\mu_k$ is biased by the close points. While we increase $k$, $\mu_k$ increases, meaning that the classes are becoming more and more distinguishable from one another. The high values of  $\mu_k$ show a high distance between the classes. For spatial entities that are 80 m and 100 m apart, $\mu_k$ starts dropping faster than for those that are 30 m and 50 m apart (Fig.~\ref{fig:estimk}). This observation can be justified by the fact that closeby entities are more difficult to classify, so the "grey" area of the potential cut-off is larger. However, \emph{SkyEx-D} detects the first decrease in $\mu_k$ from the first derivative and fixes $k_d$. Graphically, this point coincides with the beginning of the "grey" area. Even though $k_d$  is sometimes fixed far from $k_f$ (m=50), the corresponding \textit{F-measures} are almost the same (Fig.~\ref{fig:f}). 
The red line in Fig.~\ref{fig:f} corresponds to $k_f$ and the green line to $k_d$. The difference in \textit{F-measure} is 0.002 for 30 m, 0.009 for 50 m, 0.002 for 80 m, and 0.004 for 100 m. Thus, \emph{the difference in F-measure for the classifying the pairs using $k_d$ instead of $k_f$ is always less than 0.01.} This means that our \emph{SkyEx-D}, even though fully unsupervised, is almost optimal in terms of \textit{F-measure}.

In terms of precision, recall and \emph{F-measure}, \emph{QuadSky} with \emph{SkyEx} in \cite{Isaj2019multi},  \emph{QuadSky} with \emph{SkyEx-F}, and  \emph{QuadSky} with \emph{SkyEx-FES} report the same values. However, the underlying algorithms are different. \emph{SkyEx} in \cite{Isaj2019multi} needs the threshold $k$ to separate the skylines, whereas for \emph{SkyEx-F} and \emph{SkyEx-FES}, there is no need for specifying $k$ because the algorithms will fix it through the skyline explorations (only 30\% of the skylines for \emph{SkyEx-FES}). \emph{QuadSky} with \emph{SkyEx-D}, being fully unsupervised, might yield different results. The optimal scenario is if it fixes $k_d$ as the $k_f$.

\subsection{Comparison with Baselines}

\begin{table}[b]
\centering
\scriptsize
\caption{Comparison with the baselines}
\begin{tabular}{@{}llllllll@{}}
\toprule
                                                              & \multicolumn{3}{c}{$\boldsymbol{D_{\textit{full}}}$}                     & \multicolumn{3}{c}{$\boldsymbol{D_{\textit{sample}}}$}                  \\ \midrule
\textbf{Approach}                                                              &  \textbf{Prec.} & \textbf{Rec.} & \textbf{F1} &  \textbf{Prec.} & \textbf{Rec.} & \textbf{F1}\\ \midrule
Berjawi et al.(V1)\cite{berjawi2014representing} & \textbf{0.93}        & 0.26  & 0.41 & \textbf{1.00} & 0.27 & 0.43 \\
Berjawi et al.(V1)\cite{berjawi2014representing}-\emph{Flex} & 0.87               & 0.50            & 0.63      & 0.79               & 0.42            & 0.55  
\\ Berjawi et al.(V2)\cite{berjawi2014representing} & 0.73        & 0.56  & 0.63 & 0.97 & 0.60 & 0.74 
\\ Berjawi et al.(V2)\cite{berjawi2014representing}-\emph{Flex}  & 0.73        & 0.56  & 0.63 & 0.82               & 0.76            & 0.79  \\

     Morana et al.\cite{morana2014geobench}  & 0.39      & 0.60   &  0.47 & 0.33 & 0.60 & 0.43 \\
      Karam et al.\cite{karam2010integration}  & 0.23      & \textbf{0.73}   &  0.35 & 0.54 & 0.68 & 0.60 \\
      \emph{QuadSky} with \emph{SkyEx-F}  & 0.87      & 0.60  & \textbf{0.72} & 0.87 & \textbf{0.82} & \textbf{0.85} \\
      \emph{QuadSky} with \emph{SkyEx-D}  & 0.85      & 0.62  & 0.71 & 0.87 & \textbf{0.82} & \textbf{0.85} \\

\bottomrule
\end{tabular}
\label{tab:baselines}
\end{table}

Even though there are several papers in spatial data integration, the works of \cite{berjawi2014representing,morana2014geobench, karam2010integration} are the most similar to ours, as the rest of the related work considers only spatial objects, not spatial entities, or uses supervised learning techniques. We will compare \emph{QuadSky} to Berjawi et al.\cite{berjawi2014representing},  Morana et al.\cite{morana2014geobench}, and Karam et al.\cite{karam2010integration}.
Berjawi et al. \cite{berjawi2014representing} propose Euclidean distance for the geographic coordinates and Levenshtein similarity for all other attributes. The similarities added together to a global similarity. The attributes mentioned in the paper are the name and the phone. However, since the phone is part of our automatic labeling, it can not be used in the algorithm as well. The authors consider pairs with $\textit{score} \geq 0.75$ as a match with high confidence. We use this threshold but also try other thresholds that might yield better results (the versions with the suffix {\emph{-Flex}}). We compare against two versions proposed by the authors: name + address + geographic coordinates (V1) and name + geographic coordinates (V2). 
Morana et al. \cite{morana2014geobench} suggest filtering entities that share the same category or a token in the name. Then these entities are compared using the Euclidean distance for the coordinates, Levenshtein for the address and name, and Resnik similarity (Wordnet) for the category. Attributes like address, phone, etc. are considered secondary, so they are given $\frac{1}{3}$ of the weight in the similarity score function, while name, category, and geographic proximity carry $\frac{2}{3}$ of the weight. The authors show top $k$ matches for each entity to the user to decide.
Karam et al. \cite{karam2010integration} starts with filtering spatial entities that are 5 m apart. Then, the similarity of the name is measured with Levenshtein distance, the geographic similarity with Euclidean distance and the keywords are compared semantically. In order to decide which pairs to match and which not, the similarities are fused using belief theory \cite{raimond2008data}.

 

The results using $D_{\textit{full}}$ and $D_{\textit{sample}}$ are presented in Table~\ref{tab:baselines}. In general, all the methods performed better in $D_{\textit{sample}}$ due to the better quality of the labels. Berjawi et al.(V2)\cite{berjawi2014representing} yields reasonable results, the second best after \emph{QuadSky}, with an \emph{F-measure} of 0.63 in $D_{\textit{full}}$ and 0.74 in $D_{\textit{sample}}$.
If we allow flexible thresholds, Berjawi et al.(V2)\cite{berjawi2014representing}-\emph{Flex} in $D_{\textit{full}}$ finds the same best threshold of 0.75, whereas in $D_{\textit{sample}}$ the threshold of 0.65 yields better results, increasing the  $\textit{F-measure}$ from 0.74 to 0.79 (see Annex D for all the thresholds and their results).
To compare with Morana et al.\cite{morana2014geobench}, we tried all values $k$ from 1 to the maximal matches for a single point (see Fig. 10 in \cite{Isaj2019multi}). 
The highest value of $\textit{F-measure} $ corresponded to a $\textit{p}$ of 0.39 and a $\textit{r}$ of 0.60. The behavior of Morana et al.\cite{morana2014geobench} in $D_{\textit{sample}}$ is similar; the best value of $\textit{F-measure} $ was achieved for $k=3$ and results are similar to those in $D_{\textit{full}}$. The work of Karam et al.\cite{karam2010integration} achieves the highest $\textit{r}$ of 0.73 but a very low value of $\textit{p}$ of 0.23 for $D_{\textit{full}}$. As a result, the $\textit{F-measure}$ is only 0.47. However, in $D_{\textit{sample}}$, the method performs better overall ($\textit{F-measure}$ =0.6). 

\emph{The QuadSky versions  provide the best trade-off between $\textit{p}$ and $\textit{r}$, and thus, the highest $\textit{F-measure}$ in both datasets}. In $D_{\textit{sample}}$, \emph{QuadSky} with \emph{SkyEx-F} and \emph{QuadSky} with \emph{SkyEx-D}  achieve the best $\textit{r}$ compared to all baselines. What is more important, \emph{QuadSky with SkyEx-D, even using an unsupervised algorithm, is still better than the threshold-based baselines.} The highest $\textit{p}$ values for both datasets is achieved by Berjawi et al.(V1)\cite{berjawi2014representing} but a very low $\textit{r}$ and poor model performance overall. In fact, models that achieve extreme values (high precision-low recall or low precision-high recall) are not a viable solution because they are either too restrictive or too flexible, and their predictability is poor. 
Berjawi et al. \cite{berjawi2014representing}(V2)-\emph{Flex} assumes the same weights for all similarities, and the reported values of $\textit{p}$ and $\textit{r}$ are good. However, \emph{the behaviors of the pairs can be of all types. QuadSky can capture these different behaviors better than a simple sum would}.

Regarding the complexity of the baselines, we cannot judge in terms of the blocking techniques because there are no details on whether the authors used an index to create the blocks. However, as we show in Fig. \ref{fig:compQuad}, the available FNN solutions in Postgres still do not scale as well as our \emph{QuadFlex}. Therefore, we perform better in the blocking step. The pairwise comparison has a linear complexity for all baselines and our solution. As for the labeling, the baselines do not need the quadratic complexity induced by our skylines. Our  \emph{SkyEx-*} family of algorithms run for 1 minute in  $D_{\textit{sample}}$ and up to 2 hours in  $D_{\textit{full}}$ with 777,452 pairs. Nevertheless, the entity linkage problem is performed offline, and consequently, even though a fast solution is preferable in general, the effectiveness is much more important, and here \emph{QuadSky} significantly outperforms the baselines.

\subsection{Comparison with Supervised Learning Techniques}

\begin{table}[b]
\centering
\scriptsize
\caption{Comparison with supervised learning}
\begin{tabular}{@{}llllllllll@{}}
\toprule
                         & \multicolumn{3}{c}{\textbf{$\boldsymbol{D_{\textit{full}}}$ -$\boldsymbol{D_{\textit{full}}}$}}      & \multicolumn{3}{c}{\textbf{$\boldsymbol{D_{\textit{sample}}}$-$\boldsymbol{D_{\textit{sample}}}$}}  & \multicolumn{3}{c}{\textbf{$\boldsymbol{D_{\textit{sample}}}$-$\boldsymbol{D_{\textit{full}}}$}}     \\ \midrule
\textbf{Method}          & \textbf{Pr.} & \textbf{Rec.} & \textbf{F1}   & \textbf{Pr.} & \textbf{Rec.} & \textbf{F1}   & \textbf{Pr.} & \textbf{Rec.} & \textbf{F1}   \\ \midrule
{Log. reg.} & 0.83           & 0.70          & \textbf{0.76} & 0.80           & 0.83          & 0.81          & 0.70           & 0.72          & 0.71          \\
{SVM}             & \textbf{0.88}           & 0.67          & \textbf{0.76}          & {0.81}  & 0.80          & 0.81          & 0.71           & 0.70          & 0.71          \\
{Dec. Tree}   & \textbf{0.88}  & 0.66          & 0.75          & \textbf{0.93}           & {0.82} & \textbf{0.87} & 0.65           & 0.74          & 0.69          \\
{Naive B.}      & 0.71           & \textbf{0.77} & 0.74          & 0.63           & \textbf{0.85}          & 0.72          & 0.62           & \textbf{0.77} & 0.69          \\
\emph{SkyEx-F}         & 0.80           & 0.69          & 0.74          & 0.87           & 0.82          & 0.84          & 0.80           & 0.69          & \textbf{0.74} \\
\emph{SkyEx-D}         & 0.81           & 0.68          & 0.74          & 0.87           & 0.82          & 0.84          & \textbf{0.81}  & 0.68          & \textbf{0.74} \\ \bottomrule
\end{tabular}
\label{tab:supervised}
\end{table}

In this section, we keep our \emph{QuadSky} steps but replace the labeling of the pairs with a supervised learning technique. 
We decided to compare the \emph{SkyEx-*} family of algorithms with logistic regression \cite{hosmers}, support vector machines (SVM) \cite{cortes1995support}, decision trees \cite{belson1959matching}, and Naive Bayes \cite{bayes1763lii}, which are supervised learning techniques commonly used in entity resolution problems \cite{isaj2019profile, kopcke2010evaluation, you2011socialsearch, sehgal2006entity, goga2013exploiting}. We applied these methods on $D_{\textit{full}}$ pairs that are at most 30 meters apart (dataset description in Table 3). We experimented with training on 75\% of $D_{\textit{full}}$ and testing on the remaining 25\% with 4-fold cross validation ($D_{\textit{full}}$-$D_{\textit{full}}$), training on 75\% of $D_{\textit{sample}}$ and testing on the remaining 25\% with 4-fold cross validation ($D_{\textit{sample}}$-$D_{\textit{sample}}$), and training on $D_{\textit{sample}}$ and testing on $D_{\textit{full}}$ ($D_{\textit{sample}}$-$D_{\textit{full}}$). 
The results are presented in Table \ref{tab:supervised}. While logistic regression and SVM yield a slightly higher \emph{F-measure} of 0.76 in $D_{\textit{full}}$-$D_{\textit{full}}$, our algorithms, which do not build their model on  labeled data, have almost the same \emph{F-measures} (0.74 for \emph{SkyEx-F} and \emph{SkyEx-D}) in $D_{\textit{full}}$-$D_{\textit{full}}$. 
For the manually labeled dataset in $D_{\textit{sample}}$-$D_{\textit{sample}}$, our algorithms perform the second best (\emph{F-measure} of 0.84), after the decision trees.  \emph{SkyEx-F} and \emph{SkyEx-D} outperform the logistic regression, SVM, and the Naive Bayes, which yield \emph{F-measures} of 0.81, 0.81, and 0.72, respectively.  
Having a large training set as in $D_{\textit{full}}$-$D_{\textit{full}}$ is unrealistic in most real cases. Thus, we tried a more realistic scenario, where one would prepare a small manually labeled training set, and then, test the trained model on the full data ($D_{\textit{sample}}$-$D_{\textit{full}}$). \emph{In this (most realistic) case, {SkyEx-F} and {SkyEx-D} outperform all supervised methods by 0.03-0.05 in {F-measure}, showing the main weakness of supervised models, namely that the $D_{\textit{sample}}$ model is not representative enough when applied to $D_{\textit{full}}$. }

In general, the spatial entity linkage problem suffers from the lack of labeled data \cite{Isaj2019multi,berjawi2014representing,morana2014geobench}. Consequently, the applicability of supervised learning techniques is limited. On the contrary, \emph{SkyEx-D} is completely unsupervised and can still achieve results similar to a supervised technique. 
If the labeled data is present, note that supervised learning techniques build the model on the labeled data, whereas  \emph{SkyEx-F} and \emph{SkyEx-FES} use the labels only to tune the threshold because the construction of the skylines is independent of the labels. For this reason, \emph{in contrast to supervised learning, {SkyEx-F}, and {SkyEx-FES} do not require a big and representative training set, do not struggle with class imbalance, do not overfit the data, and their dimensionality is minimal (one skyline versus high-dimensional data).  }
  

\subsection{Comparison of SkyEx-D to Clustering Techniques}

In Sect. 7.2, we claimed that clustering techniques would not manage to create two clusters: one for the positive-class pairs and one for the negative-class pairs. In this section, we will replace \emph{SkyEx-D} with common clustering techniques and evaluate the formed clusters. We are comparing to distance-based clustering (k-means \cite{macqueen1967some} and k-medoids \cite{kaufman1987clustering}),  hierarchial clustering \cite{hastie2009hierarchical} (agglomerative), and  density-based clustering (DBSCAN \cite{ester1996density}). The results are presented in Table \ref{tab:clustering}. For k-means and k-medoids, we specified the number of clusters as 2. For the hierarchical clustering, we cut the dendrogram to create two clusters. For DBSCAN, we tried several values of minimum points and $\epsilon$ to form either two clusters, or one cluster and noise points. We report the version with the noise points in the table because it yields better results. For the labeling, we tried both versions (labeling cluster 1 as positive and the rest as negative and vice-versa) and report the best version in the table. Distance-based clustering yields the best results, having the highest recall but with a very low precision of 0.28 in $D_{\textit{full}}$, and the second-best (after \emph{SkyEx-D}) \emph{F-measure} of 0.74 in $D_{\textit{sample}}$. 
Hierarchical clustering achieves higher precision than distance-based but with a very low recall of 0.11 in $D_{\textit{full}}$, while the results are reversed to a high recall of 0.91 and a low precision of 0.23 in $D_{\textit{sample}}$. For DBSCAN, the best values were achieved when we labeled the cluster as negative, and the noise points as positive, resulting in a recall of 1.0, but a very low precision of 0.23 in $D_{\textit{full}}$ and 0.26 in $D_{\textit{sample}}$. This means that the positive-class pairs are not dense enough to form a cluster. Our \emph{SkyEx-D} focuses more on the distance between the classes rather than within the classes, and thus outperforms clustering.

\begin{table}[b]
\centering
\scriptsize
\caption{Comparing SkyEx-D to clustering techniques}
\begin{tabular}{@{}lllllll@{}}
\toprule
\textbf{} &  \multicolumn{3}{c}{$\boldsymbol{D_{\textit{full}}}$} &  \multicolumn{3}{c}{$\boldsymbol{D_{\textit{sample}}}$ } \\ \midrule
\textbf{Method} & \textbf{Prec.} & \textbf{Rec.} & \textbf{F1} & \textbf{Prec.} & \textbf{Rec.} & \textbf{F1} \\ \midrule
K-means & 0.28 & 0.96 & 0.44 & 0.62 & 0.92 & 0.74 \\
K-medoids & 0.28 & 0.96 & 0.44 & 0.62 & 0.92 & 0.74 \\
Hierarchial & 0.62 & 0.11 & 0.19 & 0.23 & 0.91 & 0.36 \\
DBSCAN & 0.23 & \textbf{1.00} & 0.37 & 0.26 & \textbf{1.00} & 0.42 \\
\emph{SkyEx-D} & \textbf{0.81} & 0.68 & \textbf{0.74} & \textbf{0.87} & 0.82 & \textbf{0.84} \\ \bottomrule
\end{tabular}
\label{tab:clustering}
\end{table}

\section{Conclusions and Future Work}

Location-based sources provide rich details and semantics about spatial entities. However, identifying which pairs of spatial entities refer to the same physical entity  is a challenging problem. 
In this paper, we addressed the problem of spatial entity linkage across multiple location-based sources. We proposed \emph{QuadSky}, an approach that consists of a spatial blocking technique \emph{QuadFlex}, pairwise comparisons with suitable similarity metrics for each attribute, a skyline-based ranking algorithm \emph{SkyRank}, and the \emph{SkyEx-*} family of algorithms for classifying the pairs. \emph{QuadFlex} arranges the spatial entities into spatial blocks with a low execution time (4-8 times less than FNN \cite{bentley1977complexity}) and without missing relevant comparisons (99.99\% of FNN comparisons). 
\emph{SkyEx-F} achieves 0.84 $\textit{p}$ and 0.84 $\textit{r} $ on a manually labeled dataset and 0.87 $\textit{p}$ and 0.6 $\textit{r}$ on an automatically labeled dataset. We provided a theoretical guarantee to prune 73\% of the skyline explorations in \emph{SkyEx-F} with the novel \emph{SkyEx-FES} without any loss of \emph{F-measure}. Our fully unsupervised \emph{SkyEx-D} finds $k_d$ very close to the optimal $k_f$ (an \emph{F-measure} loss of just 0.01).  The \emph{SkyEx-*} family of algorithms outperforms the existing baselines in terms of $\textit{F-measure}$ and approximates the results of a supervised learning solution without the need of a labeled dataset, while \emph{SkyEx-D} yields far better results than the clustering techniques. \emph{SkyEx-F} and \emph{SkyEx-D} are already available in the R \texttt{skyex} package \cite{isaj2020skyex}, together with other functions for entity linkage. In future work, we aim to study different blocking techniques that combine several attributes and extend our \emph{SkyEx-*} family of algorithms to general (non-spatial) entity resolution problems.



\vspace{-15mm}

\begin{IEEEbiography}[{\includegraphics[width=1in,height=1.25in,clip,keepaspectratio]{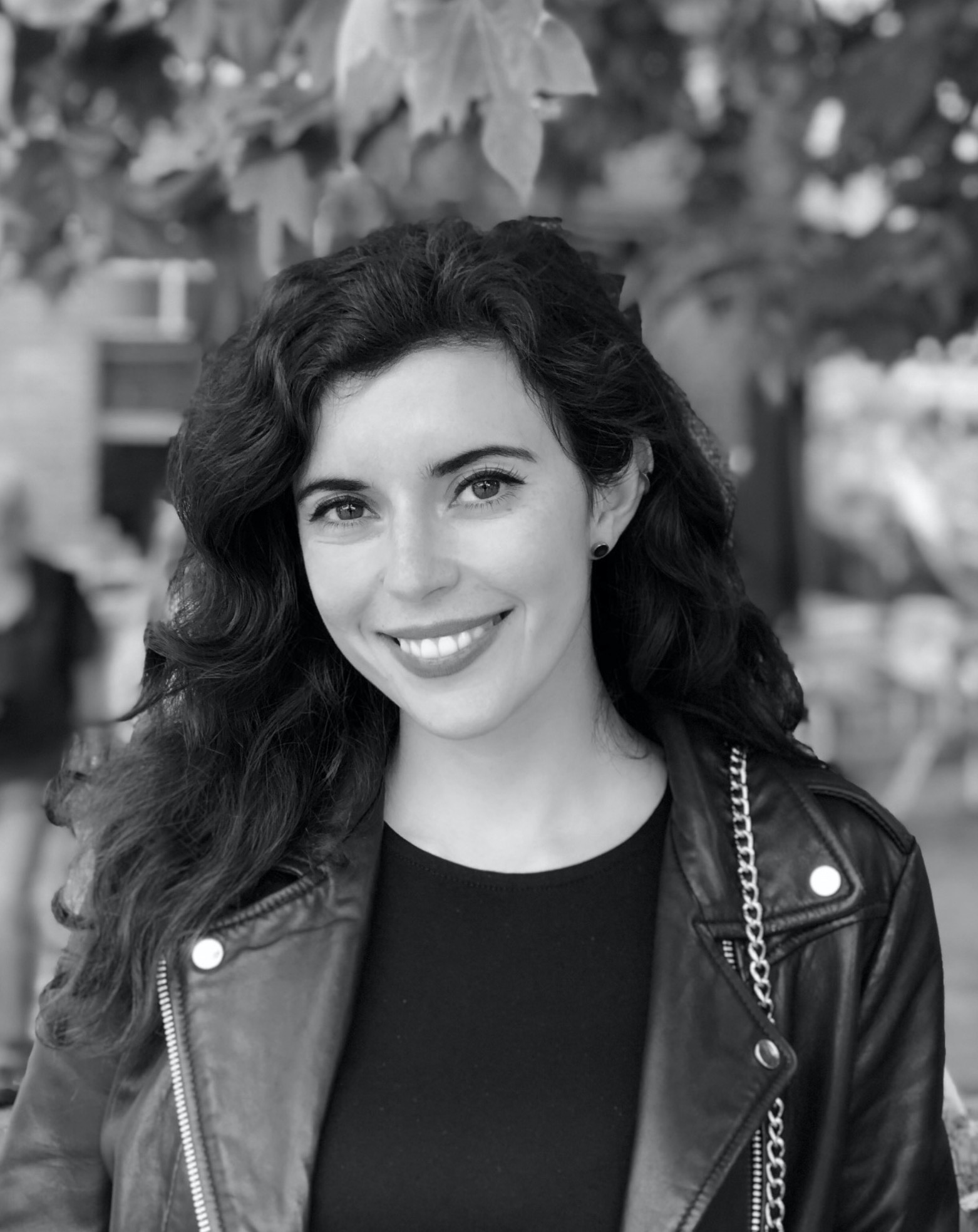}}]{Suela Isaj} is a Ph.D. fellow at the
Center for Data-Intensive Systems (Daisy) at Aalborg University, Denmark. 
She received an M.Sc. degree from the University of Tirana (2013) and a joint M.Sc. degree from  Universit\'{e} libre de Bruxelles, Universit\'{e} Fran\c{c}ois Rabelais and CentraleSup\'{e}lec (2017). 
Her current research is focused on location-based data extraction and integration, data mining, and data analytics. 
\end{IEEEbiography}
\vskip -3\baselineskip plus -1fil

\begin{IEEEbiography}[{\includegraphics[width=1in,height=1.25in,clip,keepaspectratio]{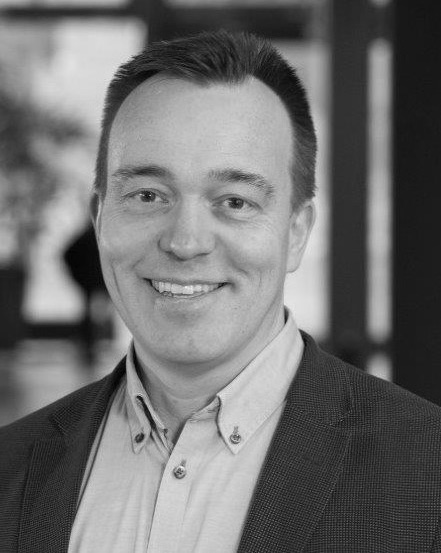}}]
{Torben Bach Pedersen}
is a professor at the
Center for Data-Intensive Systems (Daisy) at
Aalborg University, Denmark. His research concerns
Big Data analytics on multidimensional
Data. He is an ACM Distinguished
scientist, a senior member of the IEEE, and a member of the Danish Academy of Technical Sciences.
\end{IEEEbiography}
\vskip -3\baselineskip plus -1fil
\begin{IEEEbiography}[{\includegraphics[width=1in,height=1.25in,clip,keepaspectratio]{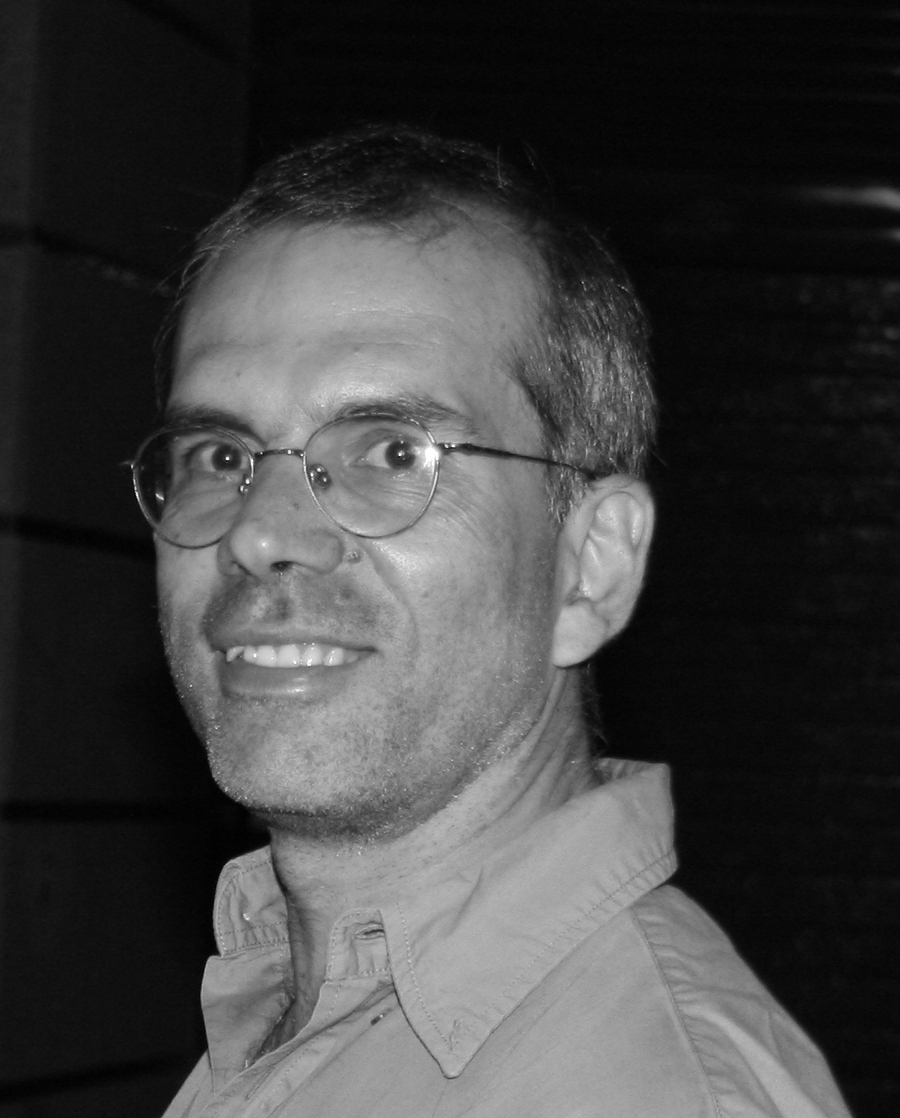}}]{Esteban Zim\'{a}nyi} is a professor and the director of the Department of Computer and Decision Engineering (CoDE) at Universit\'{e} libre de Bruxelles. He is Editor-in-Chief of the Journal on Data Semantics published by Springer. His current research interests include data warehouses, spatio-temporal databases, geographic information systems, and semantic web.
\end{IEEEbiography}

\appendices
\section{Dataset Overview}

In this section, we show the spread of the spatial entities on the map. The map is constructed with layers, starting from the sources with more points, allowing the sources with fewer points to still be visible on the map. The first layer contains the spatial entities from Krak, followed by Google Places, Yelp, and Foursquare spatial entities. Google Places and Krak have the highest overlap, potentially in the southwest area, where Google manages to cover the Krak layer. Foursquare and Yelp have spatial entities are distributed all over the area, sometimes overlapping with each other.

 \begin{figure}[htp]
    \centering
 \includegraphics[width=\linewidth]{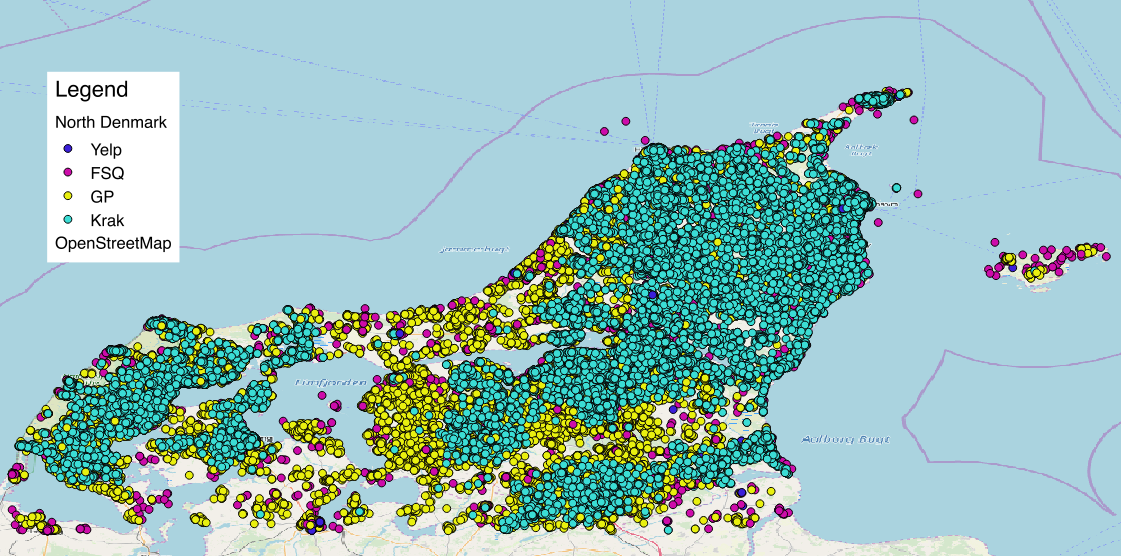}
   \caption{North Denmark dataset}
    \label{fig:Dkall}
\end{figure}  

\section{QuadFlex, quadtree, FNN GiST, FNN SP-GiST, and No-Index}
In this section, we provide further details about the time execution of different spatial blocking solutions: our proposed \emph{QuadFlex}, traditional quadtree, Fixed Radius Nearest Neighbors algorithm \cite{bentley1977complexity} (FNN) implemented in PostgreSQL indexes (GiST  (https://www.postgresql.org/docs/current/gist.html),
and SP-GiST (https://www.postgresql.org/docs/ current/spgist.html), and not using any index (No-Index). Fig. \ref{fig:full} show the scalability of these solutions while increasing the number of points. No-Index has the worst scalability, even for 5000 points, it is 145 times slower than \emph{QuadFlex}, while it becomes 368,095 times slower for 1,000,000 points. Even though the spatial entity linkage is an offline problem, using no index at all is not a feasible solution. 
 
    \begin{figure}[htb]
 \centering
    \includegraphics[width=0.9\linewidth]{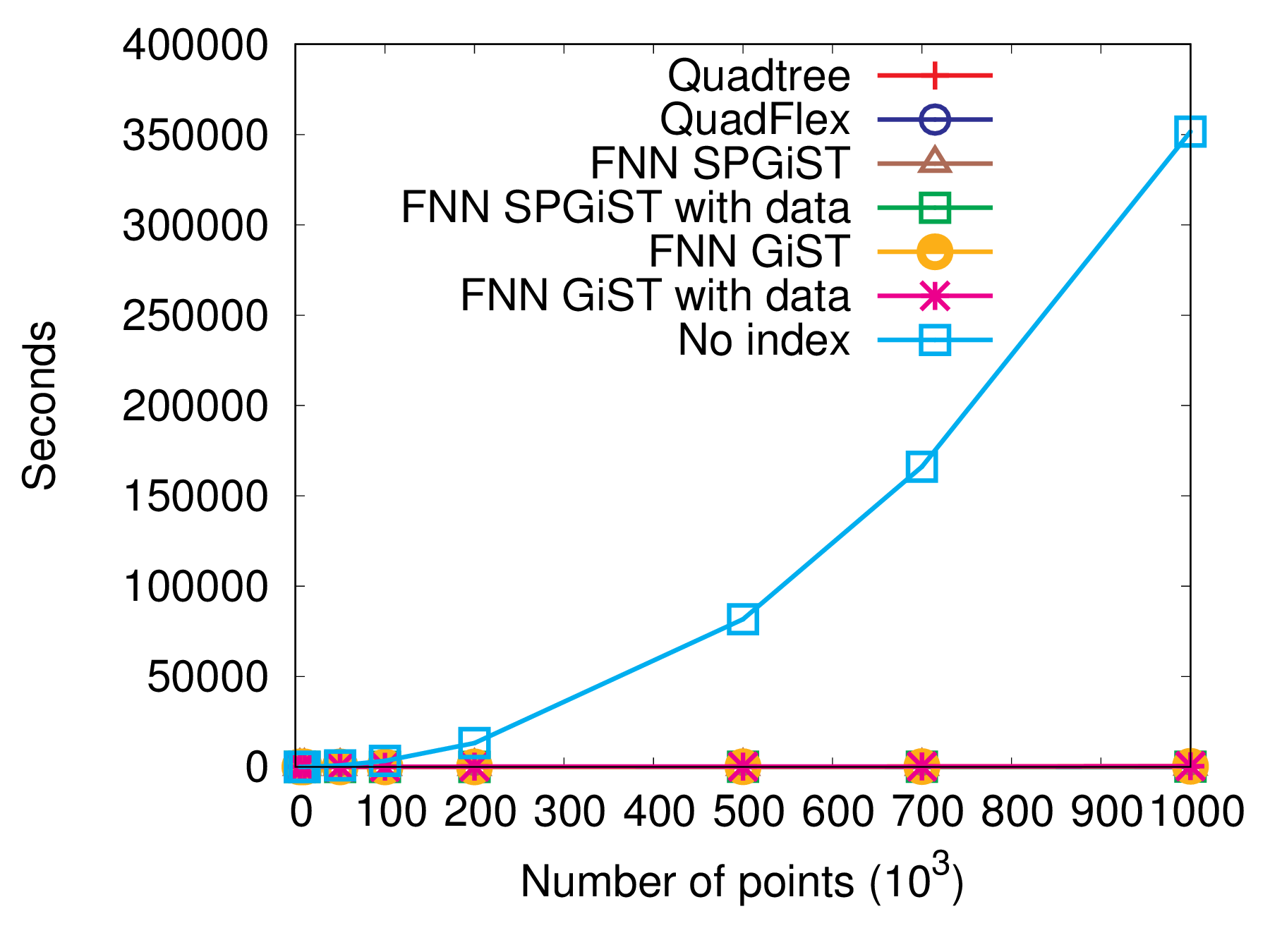}
   \caption{Execution time}
    
 \label{fig:full}
\end{figure}

\section{Experimenting with QuadFlex parameters}
\begin{figure}[htb]
\centering
  \begin{subfigure}[b]{0.45\linewidth}
    \includegraphics[width=\linewidth,trim={0 4mm 0 10mm},clip]{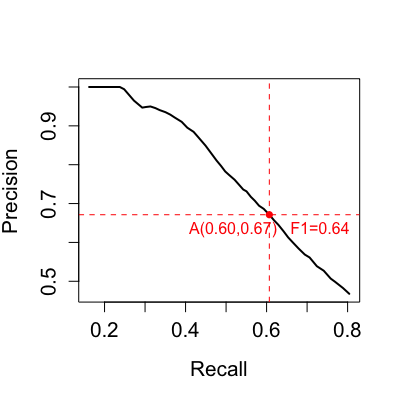}
   \caption{m=1}
    \label{fig:1m}   
  \end{subfigure}\begin{subfigure} [b]{0.45\linewidth}
     \includegraphics[width=\linewidth,trim={0 4mm 0 10mm},clip]{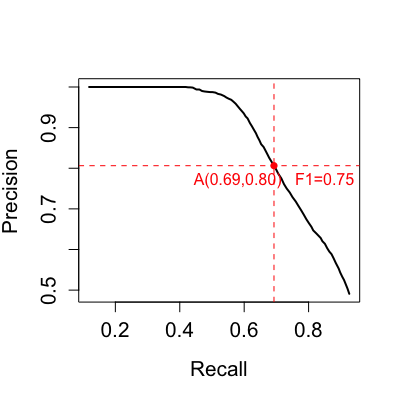}
   \caption{m=20}
    \label{fig:20m}
 \end{subfigure}
 
 \begin{subfigure}[b]{0.45\linewidth}
    \includegraphics[width=\linewidth,trim={0 4mm 0 10mm},clip]{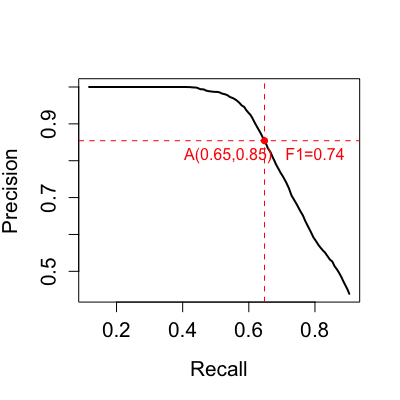}
   \caption{m=40}
    \label{fig:40m}   
  \end{subfigure}\begin{subfigure} [b]{0.45\linewidth}
     \includegraphics[width=\linewidth,trim={0 4mm 0 10mm},clip]{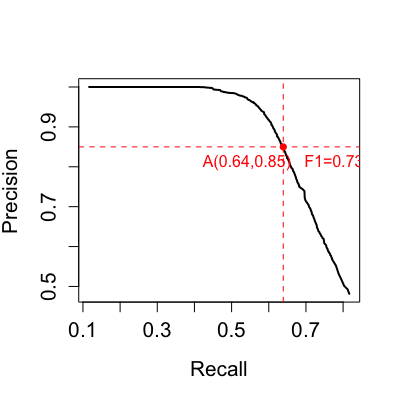}
   \caption{m=60}
    \label{fig:60m}
 \end{subfigure}
 
 \begin{subfigure} [b]{0.45\linewidth}
     \includegraphics[width=\linewidth,trim={0 4mm 0 10mm},clip]{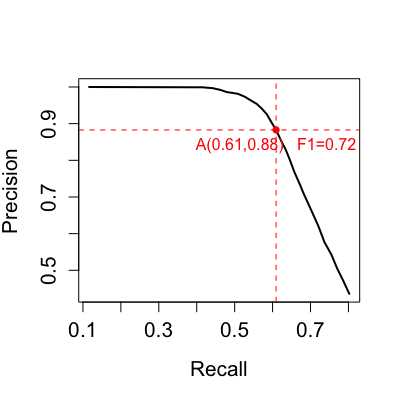}
   \caption{m=80}
    \label{fig:80m}
 \end{subfigure}\begin{subfigure} [b]{0.45\linewidth}
     \includegraphics[width=\linewidth,trim={0 4mm 0 10mm},clip]{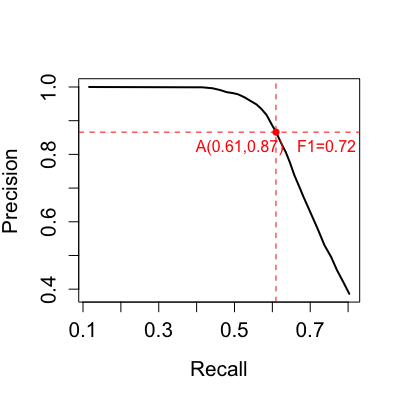}
   \caption{m=100}
    \label{fig:100m}
 \end{subfigure} 
 \caption{Performance of SkyEx-F for different m, no density limit}
 \label{fig:mExp}
\end{figure}

In this section, we experiment with different $m$ and $d$ for the \emph{QuadFlex} and label the pairs with \emph{SkyEx-F}.  We show the precision-recall curves for different $m$ in Fig. \ref{fig:mExp} and for different $d$ in Fig. \ref{fig:dExp}. The values of \emph{F-measure} stay above 0.72 for all the cases, except $m=1$. This result is interesting because the spatial proximity is a good indicator to create blocks, but not necessarily to indicate similarity. Thus, allowing a bigger threshold can be more rewarding. In general, our models yield a very good precision (above 0.8), whereas the recall always stays above 0.6.

 As for the density, all $d$ values yield good precision and recall values. A lower value of $d$ means more splitting of the \emph{QuadFlex} children, e.g., having $d=\frac{10}{1000 m^2}$ in an area is more likely than a $d= \frac{60}{1000 m^2}$. Consequently, the number of nodes that will split further is higher for lower $d$ values. We can notice a small increase in the \emph{F-measure} of 0.02 when using a small $d$ as the areas are better organized, and the entities are compared with more relevant candidates. 

\begin{figure}[htb]
\centering
  \begin{subfigure}[b]{0.45\linewidth}
    \includegraphics[width=\linewidth,trim={0 4mm 0 10mm},clip]{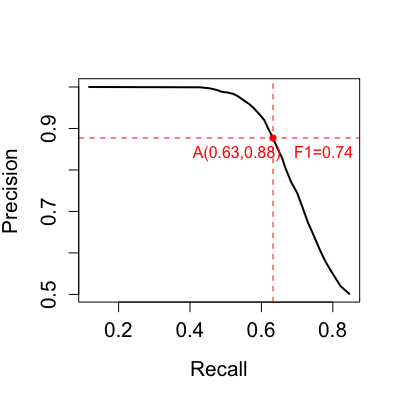}
   \caption{d=10/1000 m\textsuperscript{2}}
    \label{fig:10d}   
  \end{subfigure}\begin{subfigure} [b]{0.45\linewidth}
     \includegraphics[width=\linewidth,trim={0 4mm 0 10mm},clip]{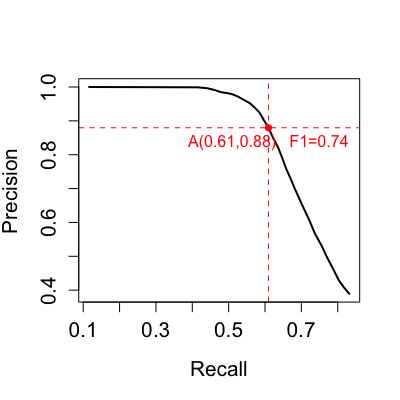}
   \caption{d=20/1000 m\textsuperscript{2}}
    \label{fig:20d}
 \end{subfigure}
 
 \begin{subfigure}[b]{0.45\linewidth}
    \includegraphics[width=\linewidth,trim={0 4mm 0 10mm},clip]{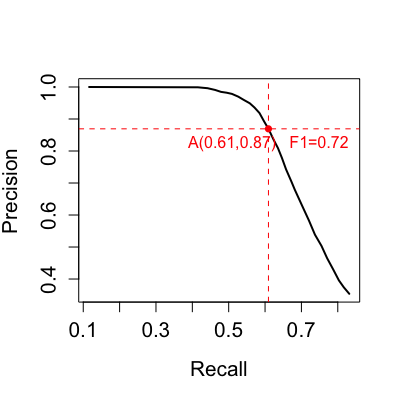}
   \caption{d=30/1000 m\textsuperscript{2}}
    \label{fig:30d}   
  \end{subfigure}\begin{subfigure} [b]{0.45\linewidth}
     \includegraphics[width=\linewidth,trim={0 4mm 0 10mm},clip]{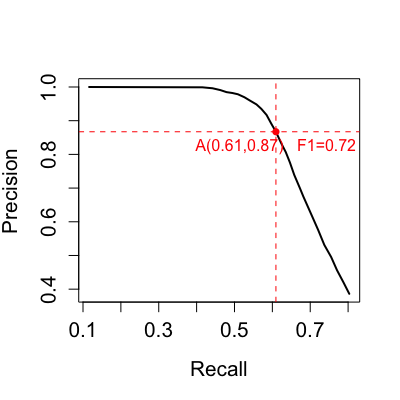}
   \caption{d=40/1000 m\textsuperscript{2}}
    \label{fig:40d}
 \end{subfigure}
 
 \begin{subfigure} [b]{0.45\linewidth}
     \includegraphics[width=\linewidth,trim={0 4mm 0 10mm},clip]{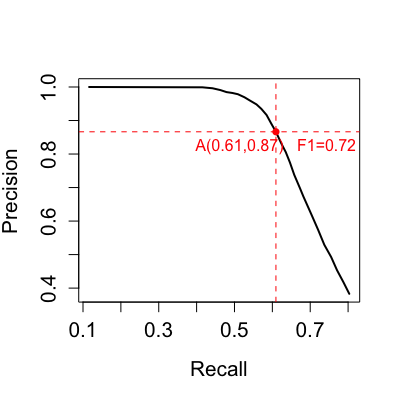}
   \caption{d=50/1000 m\textsuperscript{2}}
    \label{fig:50d}
 \end{subfigure}\begin{subfigure} [b]{0.45\linewidth}
     \includegraphics[width=\linewidth,trim={0 4mm 0 10mm},clip]{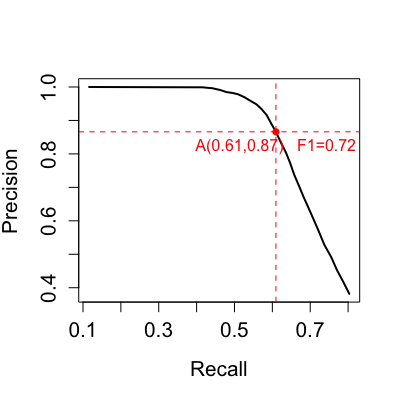}
   \caption{d=60/1000 m\textsuperscript{2}}
    \label{fig:60d}
 \end{subfigure} 
 \caption{Performance of SkyEx-F for different d, m=100}
 \label{fig:dExp}
 
\end{figure}

\section{Comparison to Berjawi et al. with flexible thresholds}
\begin{table}[htb]
\centering
\footnotesize
\caption{Flexible thresholds for Berjawi et al for $D_{full}$}
\begin{tabular}{@{}rrrrrrr@{}}
\toprule
\textbf{}        & \multicolumn{3}{c|}{\textbf{Berjawi et al {[}V1{]-\emph{Flex}}}}                           & \multicolumn{3}{c}{\textbf{Berjawi et al {[}V2{]-\emph{Flex}}}}       \\
\textbf{Cut-off} & \textbf{Prec.} & \textbf{Rec.} & \multicolumn{1}{l|}{\textbf{F1}} & \textbf{Prec.} & \textbf{Rec.} & \textbf{F1} \\ \midrule
0.5              & 0.05               & 0.74            & 0.09                                    & 0.04               & 1.00            & 0.07               \\
0.55             & 0.07               & 0.62            & 0.12                                    & 0.05               & 0.97            & 0.09               \\
0.6              & 0.11               & 0.52            & 0.18                                    & 0.13               & 0.84            & 0.23               \\
0.65             & 0.21               & 0.44            & 0.29                                    & 0.35               & 0.74            & 0.48               \\
0.7              & 0.64               & 0.36            & 0.46                                    & 0.56               & 0.64            & 0.60               \\
0.75             & 0.92               & 0.26            & 0.41                                    & 0.73               & 0.56            & \textbf{0.64}      \\
0.8              & 0.87               & 0.50            & \textbf{0.63}                           & 0.95               & 0.06            & 0.11               \\
0.85             & 0.95               & 0.44            & 0.60                                    & 0.99               & 0.02            & 0.05               \\
0.9              & 0.98               & 0.39            & 0.56                                    & 1.00               & 0.00            & 0.01               \\
0.95             & 0.99               & 0.35            & 0.52                                    & 1.00               & 0.00            & 0.00               \\
1.00                & 1.00               & 0.02            & 0.04                                    & 1.00               & 0.00            & 0.00               \\ \bottomrule
\end{tabular}
\label{tab:thresholds-full}
\end{table}

\begin{table}[htb]

\centering
\footnotesize
\caption{Flexible thresholds for Berjawi et al for $D_{sample}$}
\begin{tabular}{@{}rrrrrrr@{}}
\toprule
\textbf{}        & \multicolumn{3}{c|}{\textbf{Berjawi et al {[}V1{]-\emph{Flex}}}}                           & \multicolumn{3}{c}{\textbf{Berjawi et al {[}V2{]-\emph{Flex}}}}       \\
\textbf{Cut-off} & \textbf{Prec.} & \textbf{Rec.} & \multicolumn{1}{l|}{\textbf{F1}} & \textbf{Prec.} & \textbf{Rec.} & \textbf{F1} \\ \midrule
0.5              & 0.36               & 0.64            & 0.46                                    & 0.25               & 0.86            & 0.38               \\
0.55             & 0.45               & 0.56            & 0.50                                    & 0.29               & 0.84            & 0.44               \\
0.6              & 0.62               & 0.48            & 0.54                                    & 0.58               & 0.80            & 0.67               \\
0.65             & 0.79               & 0.42            & \textbf{0.55}                           & 0.82               & 0.76            & \textbf{0.79}      \\
0.7              & 0.96               & 0.33            & 0.49                                    & 0.94               & 0.67            & 0.78               \\
0.75             & 1.00               & 0.27            & 0.43                                    & 0.97               & 0.60            & 0.74               \\
0.8              & 1.00               & 0.06            & 0.11                                    & 0.99               & 0.55            & 0.71               \\
0.85             & 1.00               & 0.04            & 0.07                                    & 0.99               & 0.49            & 0.65               \\
0.9              & 1.00               & 0.00            & 0.00                                    & 0.99               & 0.44            & 0.60               \\
0.95             & 1.00               & 0.00            & 0.00                                    & 0.99               & 0.39            & 0.56               \\
1.00                & 1.00               & 0.00            & 0.00                                    & 1.00               & 0.03            & 0.06               \\ \bottomrule
\end{tabular}
\label{tab:thresholds-sample}
\end{table}

We provide additional details regarding the comparison to the the baseline \cite{berjawi2014representing} (Sect. 9.7). Berjawi et al. suggest the threshold 0.75 as a cut-off value for the scoring function. However, even though the type of data is similar to ours, there might be a better cut-off value for our dataset. Hence, we experimented with both versions of Berjawi et al.: name + address + geographic coordinates (V1) and name + geographic coordinates (V2).

Tables \ref{tab:thresholds-full} and \ref{tab:thresholds-sample} shows the different thresholds we tried for the work of \cite{berjawi2014representing} on $D_{\textit{full}}$ and $D_{\textit{sample}}$, respectively. The fixed threshold of 0.75 showed to be the best for Berjawi et al(V2)\cite{berjawi2014representing}-\emph{Flex} on $D_{\textit{full}}$. However, the threshold of 0.8 was better for Berjawi et al. (V1)\cite{berjawi2014representing}-\emph{Flex} (F-measure of 0.63 instead of 0.41). Regarding the manually labeled sample,  Table \ref{tab:thresholds-sample} indicates that a threshold of 0.65 would have been more profitable than 0.75, e.g., the \emph{F-measure} would increase from 0.43 to 0.55 for Berjawi et al. (V1)\cite{berjawi2014representing}-\emph{Flex} and from 0.74 to 0.79 for Berjawi et al. (V2)\cite{berjawi2014representing}-\emph{Flex}.

\end{document}